\newcommand{\ch}{{\rm ch}}
\newcommand{\ind}{{\rm ind}}
\newcommand{\rSPR}{\mathrm{rSPR}}
\newcommand{\switch}{\mathrm{switch}}
\newcommand{\pa}{\mathrm{pa}}
\newcommand{\sw}{\mathrm{sw}}
\newcommand{\cl}{\mathrm{cl}}
\newtheorem{theorem}{Theorem}[section]
\newtheorem{lemma}[theorem]{Lemma}
\newtheorem{corollary}[theorem]{Corollary}
\newtheorem{observation}[theorem]{Observation}
\title{Bounding the softwired parsimony score of a phylogenetic network}
\author[1]{Janosch D\"ocker}
\author[1]{Simone Linz}
\author[2]{Kristina Wicke}
\affil[1]{School of Computer Science, University of Auckland, New Zealand}
\affil[2]{Department of Mathematical Sciences, New Jersey Institute of Technology, USA}
\date{}
\begin{document}

\maketitle

\begin{abstract}
In comparison to phylogenetic trees, phylogenetic networks are more suitable to represent complex evolutionary histories of species whose past includes reticulation such as hybridisation or lateral gene transfer. However, the reconstruction of phylogenetic networks remains challenging and computationally expensive due to their intricate structural properties. For example, the small parsimony problem that is solvable in polynomial time for phylogenetic trees, becomes NP-hard on phylogenetic networks under softwired and parental parsimony, even for a single binary character and structurally constrained networks. To calculate the parsimony score of a phylogenetic network $N$, these two parsimony notions consider different exponential-size sets of phylogenetic trees that can be extracted from $N$ and infer the minimum parsimony score over all trees in the set. 
In this paper, we ask: What is the maximum difference between the parsimony score of any phylogenetic tree that is contained in the set of considered trees and a phylogenetic tree whose parsimony score equates to the parsimony score of $N$?
Given a gap-free sequence alignment of multi-state characters and a rooted binary level-$k$ phylogenetic network, we use the novel concept of an informative blob to show that this difference is bounded by $k+1$ times the softwired parsimony score of $N$. In particular, the difference is independent of the alignment length and the number of character states. We show that an analogous bound can be obtained for the softwired parsimony score of semi-directed networks, while under parental parsimony on the other hand, such a bound does not hold.
\end{abstract}

\textit{Keywords:} Phylogenetic networks; level-$k$; softwired parsimony; parental parsimony

\section{Introduction}
The generalisation of phylogenetic trees to phylogenetic networks goes along with the development of new methods to reconstruct phylogenetic networks from genomic data. Since phylogenetic networks are structurally much more complicated than phylogenetic trees, the  algorithms to infer networks are typically computationally  expensive. Indeed, several optimisation problems that can be solved efficiently for phylogenetic trees, become computationally difficult for phylogenetic networks. For example, the small parsimony problem, which seeks to find the {\it parsimony score} of a given phylogenetic tree with character states assigned to its leaves is solvable in polynomial time for phylogenetic trees, e.g., using the well-known Fitch-Hartigan algorithm~\cite{fitch1971toward,hartigan1973algorithm}, but becomes NP-hard under different notions of parsimony for phylogenetic networks, even for a single binary character and structurally constrained networks~\cite{fischer2015computing,van2018improved}.

Despite the popularity of model-based methods to infer phylogenetic trees, maximum parsimony (see, e.g.~\cite{felsenstein2004inferring} and references therein) continues to be widely used in certain areas of evolutionary biology, such as the analysis of morphological data (see, e.g. \cite{sansom2018parsimony,schrago2018comparative,smith2019bayesian}). Moreover, since calculating the parsimony score of a phylogenetic tree is computationally less expensive than calculating its likelihood, parsimony trees are often used as starting trees from which a search through tree space is started~\cite{stamatakis2005raxml} and are also used in Bayesian phylogenetic inference~\cite{zhang2020parsimonyguided}.

Recently,  different notions for parsimony on rooted phylogenetic networks have been proposed, referred to as  {\it hardwired}, {\it softwired}, and {\it parental} parsimony. Softwired~\cite{nakhleh2005reconstructing} and parental~\cite{van2018improved} parsimony both consider collections of trees that can be extracted from a rooted phylogenetic network (so-called {\it displayed trees}, respectively {\it parental trees}) and define the parsimony score as the minimum parsimony score of any tree in the collection. 
Softwired parsimony is implemented in the popular software package \textsf{PhyloNetworks}~\cite{solis2017phylonetworks} and is the main focus of this paper.
Hardwired parsimony~\cite{kannan2012maximum}, on the other hand, calculates the parsimony score of a phylogenetic network by considering character-state transitions along all edges of the network. As the sets of rooted phylogenetic trees that are evaluated when computing the softwired and parental parsimony scores of a phylogenetic network have exponential size, it is of interest to investigate the differences in parsimony scores of elements in these sets. 
Given a gap-free alignment of multi-state characters and a rooted binary level-$k$ network $N$ (formally defined below), we analyse how different the parsimony score of any phylogenetic tree displayed by $N$ and the softwired parsimony score of $N$ can be. We show that independent of the alignment length and number of character states, this difference is bounded by $k+1$ times the parsimony score of $N$. Thus, while computing the softwired parsimony score is in general an NP-hard problem (even for a single binary character and a structurally constrained network~\cite[Theorem 4.3]{fischer2015computing}), our result implies that an upper bound for the softwired parsimony score of $N$ can be obtained in polynomial time by simply evaluating the parsimony score of an arbitrary phylogenetic tree that is displayed by $N$. In particular if the level of $N$ is small, this upper bound gives a good indication of the magnitude of the softwired parsimony score of $N$. 

Related to our result, it was shown by Fischer et al.~\cite[Theorem 5.7]{fischer2015computing} that the NP-hard optimisation problem of computing the softwired parsimony of a rooted level-$k$ network for a single multi-state character is, on the positive side, also fixed-parameter tractable, when the parameter is $k$. If one considers more than a single binary character, the softwired parsimony problem is NP-hard even for a rooted level-$1$ network~\cite[Theorem 1]{kelk2019finding}. As a consequence of the latter negative result, Kelk et al.~\cite{kelk2019finding} posed the following question.  Are there good (i.e.~constant-factor) approximation algorithms for computing the softwired parsimony score of a rooted phylogenetic network $N$ and a sequence alignment $A$ without gaps under the following three restrictions: (i) $N$ is level-$1$, (ii) each biconnected component of $N$ has exactly three outgoing edges, and (iii) $A$ consists of binary characters?

As hinted at above, from an algorithmic perspective, our upper bound result implies a $(k+1)$-approximation algorithm for computing the softwired parsimony score of a rooted binary level-$k$ network $N$. Specifically, take an arbitrary phylogenetic tree that is displayed by $N$, compute its parsimony score, and use this to approximate the softwired parsimony score of $N$. If the level of $N$ is fixed, this algorithm provides a polynomial-time constant factor approximation. Hence, we answer the aforementioned question by Kelk et al. affirmatively for a much larger class of rooted phylogenetic networks in the sense that our result holds for level-$k$ networks (for a fixed non-negative integer $k$), it does not require restriction (ii), and it holds for gap-free alignments independent of the number of character states. Our result also complements a recent paper by Frohn and Kelk~\cite{frohn23}, in which the authors establish a 2-approximation algorithm for the softwired parsimony problem on binary tree-child  networks for a single character. 

While softwired parsimony for rooted phylogenetic networks is the main focus of our paper, we additionally show that an analogous upper bound for the softwired parsimony score holds for {\it semi-directed} networks that are obtained from rooted phylogenetic networks by deleting the root and omitting the direction of all but reticulation edges. Semi-directed networks have recently been central to studying identifiability questions related to phylogenetic networks and to developing phylogenetic network estimation algorithms (e.g.~\cite{allman2019nanuq,gross2018distinguishing,hollering2021identifiability,solis2016inferring}). We also briefly turn to the notion of parental parsimony (on rooted phylogenetic networks) and show by way of counterexample that an analogous bound for the parental parsimony score does not hold. 

The remainder of this paper is organised as follows. We define all relevant concepts related to phylogenetic trees and networks, introduce the notion of softwired parsimony, and state the main result in Section~\ref{sec:prelim}. In Section~\ref{sec:rSPR}, we revisit the rSPR distance and establish an upper bound on this distance for two phylogenetic trees that are both displayed by a given phylogenetic network. In Sections~\ref{sec:informativeblobs} and~\ref{sec:clusterreduction}, we introduce the notion of an {\it informative blob} and a {\it blob reduction}, respectively. Informative blobs are a novel concept that is crucial for obtaining our main result, the upper bound on the softwired parsimony score, which we establish in Section~\ref{sec:main}. Sections~\ref{sec:parental}  and \ref{sec:semi-directed} are then devoted to parental parsimony on rooted networks and softwired parsimony on semi-directed networks, respectively. We end the paper with some concluding remarks and directions for future research in Section~\ref{sec:conclusion}.

\section{Preliminaries and statement of main result}\label{sec:prelim}
This section introduces notation and terminology, and states the main result. Throughout this paper, $X$ denotes a non-empty finite set. Let $G$ be a directed graph. We use $V(G)$ and $E(G)$ to denote the vertex set and edge set, respectively, of $G$. Furthermore, for each edge $(u,v)$ of $G$, $u$ is called a {\it parent} of $v$ and $v$ is called a {\it child} of $u$. We sometimes also refer to $u$ and $v$ as {\it neighbours} in $G$. In a similar vein, for two (not necessarily distinct) vertices $s$ and $t$ of $G$, we say that $s$ (resp. $t$) is an {\it ancestor} (resp. {\it descendant}) of $t$ (resp. $s$) if there is a directed path of length zero or more from $s$ to $t$. Now let $G$ and $G'$ be two directed graphs, and let $e=(u,w)$ be an edge of $G$. Then {\it subdividing} $e$ is the operation of replacing $e$ with two new edges $(u,v)$ and $(v,w)$. Furthermore, we call $G'$ a {\it subdivision} of $G$ if $G'$ can be obtained from $G$ by repeatedly subdividing an edge. We also consider $G$ to be a subdivision of itself.  \\

\noindent{\bf Phylogenetic trees and networks.} 
A {\it rooted binary phylogenetic network $N$ on $X$}  is a rooted acyclic digraph with no loops and no parallel edges that satisfies the following three properties:
\begin{enumerate}[label={\upshape(\roman*)}]
\item the set of leaves is $X$,
\item the out-degree of the (unique) root $\rho$
is exactly one, and 
\item every other vertex has either in-degree one and out-degree two, or in-degree two and out-degree one.
\end{enumerate}
The set $X$ is also sometimes called the {\it label set} of $N$. Furthermore, a vertex of $N$ is referred to as a {\it reticulation} if it has in-degree two and as a {\it tree vertex} if it has in-degree one and out-degree two. Similarly, an edge of $N$ that is directed into a reticulation is referred to as a {\it reticulation edge}. We denote the number of reticulations in $N$ by $h(N)$. 

Let $N$ be a rooted binary phylogenetic network on $X$. If $N$ has no reticulation, then it is called a {\it rooted binary phylogenetic $X$-tree}. Since all  phylogenetic trees and networks are rooted and binary throughout this paper except for Section~\ref{sec:semi-directed}, we refer to a rooted binary phylogenetic network as a {\it phylogenetic network on $X$} and to a rooted binary phylogenetic tree as a {\it phylogenetic $X$-tree}. 

Let $S$ be a subdivision of a phylogenetic $X$-tree. We call the directed path from the root of $S$ to its closest degree-three vertex its {\it root path}. If $S$ is a phylogenetic tree, then the root path consists of a single edge, in which case we sometimes refer to the root path  as the {\it root edge.}

Now let $N$ be a phylogenetic network. A {\it biconnected component} of $N$ is a maximal subgraph of $N$ that is connected and cannot be disconnected by deleting exactly one of its vertices. Furthermore, a vertex of a biconnected component of $N$ is called a {\it reticulation} if it is a reticulation in $N$. With this definition in hand, we say that $N$ is {\it level-$k$} if the maximum number of reticulations of a biconnected component of $N$ is at most $k$. Lastly, we call a biconnected component of $N$ a {\it blob} if it has at least one reticulation. For a blob $B$ of $N$, we refer to the unique vertex with in-degree zero and out-degree two in $B$ as the {\it source} of $B$. A phylogenetic network $N$ on $\{x_1,x_2,\ldots,x_8\}$ with two blobs $B$ and $B'$ is shown on the left-hand side of Figure~\ref{fig:informative-blob}.\\

\noindent{\bf Clusters.}
Let $M$ be a subdivision of a phylogenetic network on $X$, and let $Y$ be a subset of $X$. We call $Y$ a {\it cluster} of $M$ if there exists a vertex $v$ in $M$ that has precisely $Y$ as its set of descendant leaves. Note that there may be more than one vertex in $M$ whose cluster is $Y$ and that this may also be the case if $M$ is a subdivision of a phylogenetic $X$-tree.  Furthermore, we use $\cl_M(v)$ or $\cl(v)$ if the subscript is clear from the context to denote the cluster of a given vertex $v$ of $M$.\\

\noindent{\bf Displaying.}
Let $N$ be a phylogenetic network on $X$ with root $\rho$, and let $T$ be a phylogenetic $X'$-tree with $X'\subseteq X$. We say that $T$ is {\it displayed} by $N$ if there exists a subgraph of $N$ that is a subdivision of $T$ that includes $\rho$, in which case this subgraph is called an {\it embedding} of $T$ in $N$. 
The set of all  phylogenetic $X$-trees that are displayed by $N$ is referred to as the {\it display set} of $N$ and denoted by $D(N)$. Ignoring the assignment of 0 and 1 to vertices for the moment, Figure~\ref{fig:informative-blob} shows a phylogenetic network $N$, a phylogenetic tree $T$ that is displayed by $N$, and an embedding $S$ of $T$ in $N$. Now, consider a subset $R$ of the reticulation edges of $N$. We refer to $R$ as a {\it switching} if, for each reticulation $v$ in $N$, it contains exactly one of the two edges that are directed into $v$. By deleting each reticulation edge of $N$ that is not in $R$, we obtain a connected subgraph $G$ of $N$ with no underlying cycle and, for each leaf $\ell\in X$, there is a directed path from the root of $G$, which coincides with $\rho$, to $\ell$.
If we repeatedly suppress each vertex in $G$ with in-degree one and out-degree one, and delete each vertex in $G$ with out-degree zero that is not in $X$ until no such operation is possible, we obtain a phylogenetic $X$-tree $T_R$. We say that $R$ yields $T_R$. By construction, $T_R$ is displayed by $N$. Conversely, observe that, for each phylogenetic $X$-tree $T$ in $D(N)$, there exists at least one switching that yields $T$. In summary, we have the following observation.

\begin{figure}[t]
    \centering
    \includegraphics[width=\textwidth]{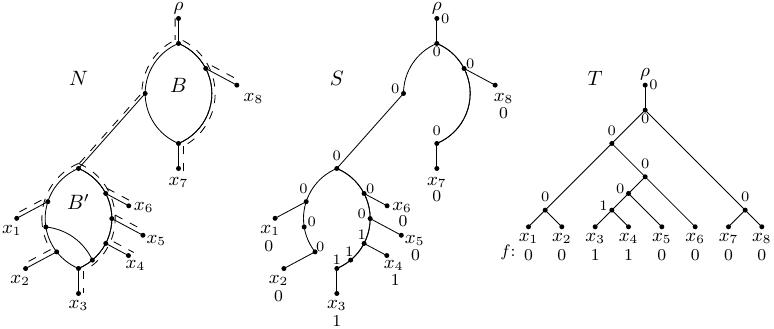}
    \caption{A phylogenetic network $N$ (left), an embedding $S$ of a phylogenetic $X$-tree displayed by $N$ together with a binary character $f$ and an extension $F$ (middle; also indicated by the dashed lines in the left panel), and the phylogenetic $X$-tree $T$ obtained from $S$ by suppressing vertices of in-degree one and out-degree one (right).}
    \label{fig:informative-blob}
\end{figure}

\begin{observation}\label{ob:switching}
Let $N$ be a phylogenetic network on $X$, and let $T$ be a phylogenetic $X$-tree. Then $T$ is displayed by $N$ if and only if there exists a switching of $N$ that yields $T$.
\end{observation}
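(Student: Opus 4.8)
The plan is to prove the two implications separately, in each case essentially by unwinding the definitions of \emph{switching}, \emph{yields}, and \emph{displayed}. Two small structural facts are used throughout. First, a counting identity: for every phylogenetic network $M$ on $X$ one has $|E(M)|-h(M)=|V(M)|-1$, obtained by summing out-degrees (root one, tree vertices two, reticulations one, leaves zero) and in-degrees. Second, if $R$ is a switching of $N$ and $G$ is the subgraph obtained from $N$ by deleting all reticulation edges not in $R$, then every vertex of $N$ is still reachable from $\rho$ in $G$: this follows by induction along a topological order of $N$, since a non-root vertex that is not a reticulation retains its unique in-edge in $G$, while a reticulation retains exactly one of its two in-edges by definition of a switching. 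In particular $G$ is connected, and since $|E(G)|=|E(N)|-h(N)$ and $|V(G)|=|V(N)|$, the counting identity shows the underlying graph of $G$ is a tree; hence pruning dead ends and suppressing degree-two vertices is well defined and produces a phylogenetic $X$-tree $T_R$.

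For the implication ``$\Leftarrow$'', let $R$ be a switching of $N$ and let $G$ be as above. Let $G'$ be obtained from $G$ by repeatedly deleting out-degree-zero vertices not in $X$; a vertex survives this pruning precisely if it is an ancestor in $G$ of some leaf in $X$, so $\rho\in V(G')$, $G'$ still contains $X$, $G'$ has no dead ends, and $G'$ is a connected subgraph of $N$ through $\rho$. Suppressing the degree-two vertices of $G'$ yields exactly $T_R$ by construction, so $G'$ is a subdivision of $T_R$ that is a subgraph of $N$ containing $\rho$; i.e., $G'$ is an embedding of $T_R$, and hence $T_R$ is displayed by $N$.

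For ``$\Rightarrow$'', suppose $T$ is displayed by $N$ and fix an embedding $S$, that is, a subgraph of $N$ containing $\rho$ that is a subdivision of $T$. Since $S$ is a subdivision of a tree, every vertex of $S$ has in-degree at most one; hence for each reticulation $v$ of $N$ lying on $S$, exactly one of the two reticulation edges into $v$ is an edge of $S$ --- put that edge into $R$ --- while for each reticulation of $N$ not on $S$ put an arbitrary one of its two in-edges into $R$. Then $R$ is a switching, and every edge of $S$ survives in the corresponding graph $G$ (reticulation edges of $S$ were chosen into $R$; all other edges of $S$ are never deleted), so $S\subseteq G$, and since $S$ has no dead ends in fact $S\subseteq G'$, with $G'$ the pruned graph as above. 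It then suffices to upgrade this to $S=G'$: this gives $T=T_R$, as both arise by suppressing degree-two vertices of the same graph. For the upgrade, take any $w\in V(G')$; since $G'$ has no dead ends, $w$ is an ancestor in $G'$ of some $\ell\in X$, and there is a directed path from $\rho$ to $w$ in $G'$ (acyclicity makes the concatenation simple), hence a directed $\rho$--$\ell$ path $P$ through $w$. But $S$ is a subtree of the underlying tree of $G'$ containing $\rho$ and $\ell$, so the unique $\rho$--$\ell$ path of $G'$, namely $P$, lies in $S$; thus $w\in V(S)$. Hence $V(S)=V(G')$ and so $S=G'$.

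Since the statement is really a consistency check between two descriptions of the trees displayed by $N$, there is no genuine obstacle; the only place needing care is the last step of the converse, where one must show that the a priori arbitrary subdivision $S$ sitting inside $N$ is recovered \emph{exactly} --- neither smaller nor larger --- by switching according to $S$ and then pruning. The argument above does this via uniqueness of paths in the underlying tree of $G'$; alternatively one could finish by the counting identity, which forces $|E(S)|=|E(G')|$ once $S\subseteq G'$ is known and both graphs are connected with the same root and leaf set.
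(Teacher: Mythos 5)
Your proof is correct and follows essentially the route the paper intends: the observation is stated there as an immediate consequence of the construction of $G$ and $T_R$ from a switching and, conversely, of reading a switching off an embedding, which is exactly what you do. Your additions (reachability from $\rho$, the edge-count identity showing $G$ is a tree, and the path-uniqueness argument forcing $S=G'$) simply supply the details the paper leaves implicit, and they are all sound.
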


\noindent {\bf rSPR operation.} Let $T$ be a phylogenetic $X$-tree, and let $e=(u,v)$ be an edge of $T$ that is not incident with the root. Let $T'$ be a phylogenetic $X$-tree obtained from $T$ by deleting $e$ and reattaching the resulting rooted subtree that contains $v$ via a new edge $f$ in the following way: Subdivide an edge of the component that contains the root of $T$ with a new vertex $u'$, join $u'$ and $v$ with $f$, and suppress $u$. We say that $T'$ has been obtained from $T$ by a {\it rooted subtree prune and regraft} (rSPR) operation. The {\it rSPR distance} between any two phylogenetic $X$-trees $T$ and $T'$, denoted by $d_\rSPR(T,T')$, is the minimum number of rSPR operations that transform $T$ into $T'$. It is well known that $T'$ can always be obtained from $T$ by a sequence of single rSPR operations and, so, $d_\rSPR(T,T')$ is well defined.\\

\noindent{\bf Characters.} 
An {\it $r$-state character on $X$} is a surjective function $f\colon X\rightarrow C$ from $X$ into a set $C$ of character states with $r=|C|\geq 1$. If $r=2$, then $f$ is called a {\it binary} character. Throughout this paper, all results are established for $r$-state characters with $r$ being fixed and arbitrarily large. For simplicity, we refer to an $r$-state character on $X$ as a {\it character} on $X$. 

Let $G$ be an acyclic digraph with leaf set $X$, and let $f\colon X\rightarrow C$ be a character on $X$. An {\it extension} of $f$ to $V(G)$ is a function $F\colon V(G)\rightarrow C$ such that $F(\ell)=f(\ell)$ for each element $\ell\in X$. For an extension $F$ of $f$, we set 
$$\ch(F,G)=|\{(u,v)\in E(G):F(u)\ne F(v)\}|,$$ and refer to $\ch(F,G)$ as the {\it changing number} of $F$.
Intuitively, each edge of $G$ that contributes to the changing number of $F$ requires a character-state transition to explain $f$ on $G$. Lastly, we say that an extension $F$ of $f$ to $V(G)$ is {\it minimum} if there exists no extension of $f$ to $V(G)$ whose changing number is strictly smaller than that of $F$.

In what follows, we often consider a sequence $(f_1,f_2,\ldots,f_n)$ of characters on $X$ instead of a single character. We call such a sequence an 
{\it alignment}. Unless stated otherwise, all alignments in this paper are sequences of $r$-state characters for $r\geq 2$ that do not contain the gap symbol ``--''. Such an alignment is referred to as {\it gap-free}. In applied phylogenetics, multiple sequence alignments frequently contain gaps which, intuitively, are placeholders that can take on any of the other $r$ character states. We will see in the last section why the restriction to gap-free alignments is necessary. Lastly, we denote a sequence $(f_1)$ that consists of a single element by $f_1$ and omit parentheses for simplicity.\\

\noindent{\bf Parsimony on phylogenetic trees and their subdivisions.}
Given an alignment $A=(f_1,f_2,\ldots,f_n)$ of characters on $X$ and an arbitrary rooted tree $T$ with leaf set $X$, we refer to 
$$PS(A,T)=\sum_{i=1}^n \min_{F_i}(\ch(F_i,T))$$
as the {\it parsimony score} of $A$ on $T$, where the minimum is taken over all extensions of $f_i$ to $V(T)$. 

Instead of calculating the parsimony score of a phylogenetic $X$-tree $T$, we are often interested in calculating the parsimony score of a subdivision of $T$ in the upcoming sections. The next lemma states that both scores are equal. Its correctness can be established analogously to the proof of Lemma 4.5 in~\cite{fischer2015computing}. In the proof of this lemma, Fischer et al. have shown that the softwired parsimony score of a character $f$ on a phylogenetic tree $T$ is equal to the parsimony score of $f$ on a particular rooted tree that is a generalisation of a subdivision of $T$ in the sense that it may contain  unlabeled leaves in addition to the leaves in $X$.

\begin{lemma}\label{l:fitch}
Let $f$ be a character on $X$, and let $S$ be a subdivision of a phylogenetic $X$-tree $T$. Then $PS(f,S)=PS(f,T)$.
\end{lemma}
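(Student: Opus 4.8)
The plan is to show that subdividing an edge can change neither a character's parsimony score upward nor downward, and then iterate, since $S$ is obtained from $T$ by a finite sequence of edge subdivisions (and $T$ is a subdivision of itself). So it suffices to prove the claim for a single subdivision: if $S'$ is obtained from a rooted tree $S$ by subdividing one edge $e=(u,w)$ with a new vertex $v$, then $PS(f,S')=PS(f,S)$.

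For the inequality $PS(f,S')\le PS(f,S)$, I would take a minimum extension $F$ of $f$ to $V(S)$ and lift it to $V(S')$ by setting $F'(z)=F(z)$ for all $z\in V(S)$ and $F'(v)=F(u)$. The only edges of $S'$ not already in $S$ are $(u,v)$ and $(v,w)$, and the edge $(u,w)$ of $S$ is no longer present; since $F'(u)=F'(v)$, the edge $(u,v)$ contributes $0$, the edge $(v,w)$ contributes $[F(u)\ne F(w)]$, which is exactly the contribution of $(u,w)$ under $F$, and all other edges contribute identically. Hence $\ch(F',S')=\ch(F,S)$, giving $PS(f,S')\le PS(f,S)$. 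For the reverse inequality $PS(f,S)\le PS(f,S')$, I would take a minimum extension $F'$ of $f$ to $V(S')$, restrict it to $V(S)$, and compare contributions: only the edge $(u,w)$ of $S$ differs from $S'$, and $[F'(u)\ne F'(w)]\le [F'(u)\ne F'(v)]+[F'(v)\ne F'(w)]$ by the triangle inequality for the discrete metric on $C$, so the single edge $(u,w)$ contributes at most as much as the two edges $(u,v),(v,w)$ did. Thus $\ch(F',|_{V(S)}},S)\le \ch(F',S')$, which yields $PS(f,S)\le PS(f,S')$.

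Combining the two inequalities gives $PS(f,S')=PS(f,S)$ for a single subdivision, and induction on the number of subdivisions needed to obtain $S$ from $T$ completes the proof. The argument does not genuinely require anything about the tree structure beyond the fact that subdivision is a local operation affecting exactly one edge, so it applies verbatim to subdivisions of phylogenetic $X$-trees.

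The only subtlety — and the place the paper's own remark flags as relevant — is that the literature's version of this fact (Lemma 4.5 in~\cite{fischer2015computing}) is phrased in terms of trees that may carry additional unlabelled leaves, whereas a genuine subdivision of a phylogenetic $X$-tree has no such leaves and the new vertices all have in-degree one and out-degree one. So I expect no real obstacle; one just needs to be careful that the new vertex $v$ is an internal (degree-two) vertex and hence that the extension may freely assign it any state, which is what makes both directions of the inequality go through cleanly. I would simply note that the statement is the special case of~\cite[Lemma~4.5]{fischer2015computing} in which no unlabelled leaves arise, or give the two-line double-inequality argument above directly.
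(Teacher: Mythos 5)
Your proof is correct, and it takes a more self-contained route than the paper does. The paper does not argue the lemma from scratch: it observes that the statement follows analogously to the proof of Lemma~4.5 of Fischer et al.\ (2015), which establishes the corresponding equality for a slightly more general class of rooted trees that may carry additional unlabelled leaves; the burden there is that unlabelled leaves can be assigned arbitrary states, which is a genuine extra complication that your setting avoids. Your argument instead reduces to a single edge subdivision and proves both inequalities directly: lifting a minimum extension across the new degree-two vertex $v$ by copying the state of $u$ shows $PS(f,S')\le PS(f,S)$, and restricting a minimum extension of the subdivided tree back to $V(S)$ and invoking the triangle inequality for the indicator of inequality, $[F'(u)\ne F'(w)]\le[F'(u)\ne F'(v)]+[F'(v)\ne F'(w)]$, shows $PS(f,S)\le PS(f,S')$; induction on the number of subdivisions finishes it. This is exactly the right local reasoning, it uses nothing beyond the definitions in the paper, and it buys a fully elementary proof at the cost of not covering the unlabelled-leaf generalisation (which this lemma does not need). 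The only blemish is notational: the expression you wrote for the restricted extension's changing number is garbled (it should read $\ch\bigl(F'\vert_{V(S)},S\bigr)\le\ch(F',S')$), but the underlying step is sound.
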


We also have the following observation.

\begin{observation}\label{ob:root-edge}
Let $f$ be a character on $X$, and let $S$ be a subdivision of a phylogenetic $X$-tree. If $F$ is an extension of $f$ to $V(S)$ such that $F(u)\ne F(v)$ for some edge $(u,v)$ of the root path of $S$, then there exists an extension $F'$ of $f$ to $V(S)$ such that $F'(u)=F'(v)$ and $\ch(F',T)<\ch(F,T)$.
\end{observation}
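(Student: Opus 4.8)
The plan is to argue that any ``change'' occurring on the root path is wasteful and can be eliminated by recolouring the portion of $S$ above the offending edge. Let $S$ have root $\rho$, and let its root path be $\rho = w_0, w_1, \ldots, w_m$, where $w_m$ is the closest degree-three vertex (the image of the root vertex of $T$, after the root edge of $T$ has been subdivided). Suppose $F$ is an extension of $f$ with $F(u) \neq F(v)$ for some edge $(u,v) = (w_{i-1}, w_i)$ on this path. Since every vertex $w_0, \ldots, w_{m-1}$ has out-degree one, the only edges of $S$ incident with $\{w_0, \ldots, w_{i-1}\}$ are the edges of the root path itself, namely $(w_0,w_1), \ldots, (w_{i-1}, w_i)$. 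Define $F'$ to agree with $F$ everywhere except on $w_0, \ldots, w_{i-1}$, where we set $F'(w_j) = F(w_i)$ for all $j < i$.

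Next I would compare changing numbers. The edges whose endpoints changed colour under the recolouring are exactly $(w_0, w_1), \ldots, (w_{i-1}, w_i)$; no other edge of $S$ has an endpoint in $\{w_0, \ldots, w_{i-1}\}$, so all other edges contribute identically to $\ch(F, S)$ and $\ch(F', S)$. Among the recoloured edges, $(w_{i-1}, w_i)$ contributed $1$ to $\ch(F,S)$ (by hypothesis $F(u) \neq F(v)$) and contributes $0$ to $\ch(F', S)$ since now $F'(w_{i-1}) = F(w_i) = F'(w_i)$. Each of the remaining recoloured edges $(w_{j-1}, w_j)$ with $j < i$ now has $F'(w_{j-1}) = F'(w_j) = F(w_i)$, hence contributes $0$ to $\ch(F', S)$, whereas its contribution to $\ch(F, S)$ was non-negative. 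Therefore $\ch(F', S) \leq \ch(F, S) - 1 < \ch(F, S)$, and $F'$ is again an extension of $f$ because we have only modified the colours of internal vertices $w_0, \ldots, w_{i-1}$, none of which is a leaf.

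The one point requiring a small amount of care — and really the only potential obstacle — is the bookkeeping of which edges are affected by the recolouring; this hinges on the structural fact that each vertex on the root path other than $w_m$ has out-degree exactly one, so the ``upper'' segment of the root path is a pendant directed path with no side branches. Given that, the inequality is immediate. (Note that the statement as written refers to $\ch(F', T)$ and $\ch(F, T)$; since the argument produces $\ch(F', S) < \ch(F, S)$, one invokes Lemma~\ref{l:fitch} — or simply reads $T$ as $S$ — to conclude; I would phrase the final line of the proof accordingly.)
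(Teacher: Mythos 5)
Your proof is correct: recolouring the vertices above the offending edge to the state of $v$ only affects root-path edges (each such vertex has in-degree at most one and out-degree one, so no side branches exist), kills the change on $(u,v)$, creates no new changes, and alters no leaf, which is exactly the argument the paper implicitly relies on in stating this as an unproved observation. Your reading of the statement's $\ch(F',T)<\ch(F,T)$ as a typo for $\ch(F',S)<\ch(F,S)$ is also the right one, since $F$ and $F'$ are extensions to $V(S)$.
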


\noindent By Observation~\ref{ob:root-edge}, we  freely assume throughout the remainder of the paper that  every extension  $F$ of a character to the vertices of a subdivision of a phylogenetic tree has the additional property that there is no character state transition on any edge of its root path. \\

\noindent{\bf Parsimony on phylogenetic networks.}
As outlined in the introduction, several notions of parsimony have been introduced that generalise 
parsimony from phylogenetic trees to phylogenetic networks. In this paper, we are focusing on the notion of softwired parsimony and briefly touch on parental parsimony in Section~\ref{sec:parental}. Roughly speaking, the softwired parsimony score of an alignment $A$ of characters on a phylogenetic network $N$ is defined to be the smallest number of character-state transitions that is necessary to explain $A$ on any phylogenetic tree that is displayed by $N$. Following~\citet{nakhleh2005reconstructing}, we now make this precise. Let $A=(f_1,f_2,\ldots,f_n)$ be an alignment of characters on $X$, and let $N$ be a phylogenetic network on $X$. The {\it softwired parsimony score} of $A$ on $N$ is defined as 
\begin{align}\label{eq:soft-one}
PS_{\sw}(A,N)=\sum\limits_{i=1}^n PS_{\sw}(f_i, N)&=\sum_{i=1}^n\min_{T\in D(N)}\min_{F_i}(\ch(F_i,T))\nonumber\\
&=\sum_{i=1}^n\min_{T\in D(N)}PS(f_i,T),
\end{align}
where, for each character $f_i$, the first minimum is taken over all  phylogenetic trees in the display set of $N$ and the second minimum is taken over all extensions of $f_i$ to $V(T)$. As per Equation (\ref{eq:soft-one}), each character in $A$ can follow a different tree in $D(N)$. A slightly more restricted definition of softwired parsimony, which has also appeared in the literature (e.g. see~\cite{kelk2017complexity,kelk2019finding}), is the following
\begin{align}\label{eq:soft-two}
     PS_{\sw'}(A,N)= \min\limits_{T \in D(N)} \sum\limits_{i=1}^n \min_{F_i}(\ch(F_i,T))
     = \min\limits_{T \in D(N)} \sum\limits_{i=1}^n PS(f_i,T),
\end{align}
where all characters in $A$ follow the same tree in $D(N)$. Clearly, if $n=1$, then  $PS_\sw(A,N)=PS_{\sw'}(A,N)$. On the other hand, for $n\geq 1$, it follows from the definition that $PS_\sw(A,N)\leq PS_{\sw'}(A,N)$. For the purpose of the upcoming sections, we adopt the softwired parsimony definition as formalised in Equation (\ref{eq:soft-one}) and will see later, that our main result holds also under the definition given in Equation (\ref{eq:soft-two}). In this context, it is worth mentioning that the hardness result for computing the softwired parsimony score of a level-$1$ network for an alignment of at least two binary characters~\cite[Theorem 1]{kelk2019finding} as mentioned in the introduction has been established for the definition given in Equation~\eqref{eq:soft-two}.\\

\noindent{\bf Statement of main result.}
The main result of this paper is the following theorem which we establish in Section~\ref{sec:main}.

\begin{theorem}\label{t:main}
Let $N$ be a phylogenetic network on $X$, and let $T$ be a phylogenetic $X$-tree in $D(N)$. Furthermore, let $A$ be an alignment of characters on $X$. Then 
$$PS(A,T)\leq (k+1)\cdot PS_{\sw}(A,N) \text{ and }PS(A,T)\leq (k+1)\cdot PS_{\sw'}(A,N),$$ 
where $k$ is the level of $N$. 
\end{theorem}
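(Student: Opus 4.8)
The plan is to reduce the statement to a single character, and then to compare an arbitrary embedding of $T$ in $N$ with an optimal embedding of a parsimony-optimal displayed tree, bounding the parsimony inflation by two ingredients: the rSPR distance between two trees displayed by $N$ (Section~\ref{sec:rSPR}), and the fact that only \emph{informative} blobs (Sections~\ref{sec:informativeblobs} and~\ref{sec:clusterreduction}) can contribute to this inflation, whose number is controlled by $PS_{\sw}(f,N)$. In detail, since $PS_{\sw}(A,N)\le PS_{\sw'}(A,N)$ it suffices to prove the first inequality, and since $PS(A,T)=\sum_{i}PS(f_i,T)$ and $PS_{\sw}(A,N)=\sum_{i}PS_{\sw}(f_i,N)$ by definition, it is enough to prove $PS(f,T)\le (k+1)\cdot PS_{\sw}(f,N)$ for a single character $f$ on $X$ and an arbitrary $T\in D(N)$. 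So fix $f$ and $T$, fix $T^{*}\in D(N)$ with $PS(f,T^{*})=PS_{\sw}(f,N)=:p$, fix embeddings $S$ and $S^{*}$ of $T$ and $T^{*}$ in $N$, and fix a minimum extension $F^{*}$ of $f$ to $V(S^{*})$; by Lemma~\ref{l:fitch} and Observation~\ref{ob:root-edge} we may take $F^{*}$ with $\ch(F^{*},S^{*})=p$ and no changing edge on the root path of $S^{*}$.

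The first step is a weaker, level-free bound. As $T,T^{*}\in D(N)$, the rSPR bound of Section~\ref{sec:rSPR} gives $d_{\rSPR}(T,T^{*})\le h(N)$. Combining this with the fact that a single rSPR operation changes the parsimony score of a character by at most one — which I would either quote from Section~\ref{sec:rSPR} or prove in one line by extension surgery (reattach the pruned subtree carrying the state of its new neighbour, so that at most one new changing edge is created, and suppressing the old attachment vertex does not increase the count) — gives $PS(f,T)\le PS(f,T^{*})+d_{\rSPR}(T,T^{*})\le p+h(N)$. This is not yet the claim, since $h(N)$ need not be controlled by $p$: a network can carry arbitrarily many ``inessential'' blobs while its softwired parsimony score stays small.

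To repair this I would invoke the blob reduction of Section~\ref{sec:clusterreduction}, guided by the notion of informative blob of Section~\ref{sec:informativeblobs}: replace every blob of $N$ that is not informative for $f$ by a trivial gadget to obtain a network $N'$ of level at most $k$ with $PS_{\sw}(f,N')=PS_{\sw}(f,N)=p$, in which every remaining blob is informative, and such that every tree of $D(N)$ maps to a tree of $D(N')$ with the same parsimony score for $f$ (this last property is precisely what the definition of ``non-informative'' has to be tailored to guarantee). Running the previous step on $N'$ and the image $T'\in D(N')$ of $T$ then yields $PS(f,T)=PS(f,T')\le p+h(N')$. Writing $b$ for the number of blobs of $N'$, each of which has at most $k$ reticulations, we get $h(N')=\sum_{B}h(B)\le kb$, so $PS(f,T)\le p+kb$.

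It remains to show $b\le p$, i.e.\ that the number of informative blobs is at most $PS_{\sw}(f,N)$, and I expect this to be the main obstacle. The idea I would pursue is that an informative blob $B$ forces at least one character-state change in \emph{every} embedding of every displayed tree, a change that can be localised to the edges lying inside $B$ together with the cut edges of $N$ incident to $B$; since distinct blobs meet only in cut vertices, the forced changes associated with different informative blobs can be charged to pairwise disjoint sets of edges of the optimal embedding $S^{*}$, whence there are at most $\ch(F^{*},S^{*})=p$ of them. Making this rigorous is delicate, because ``informative'' must be defined strongly enough that an uninformative blob genuinely contributes nothing to the relevant parsimony scores (so that the blob reduction is legitimate) yet weakly enough that the disjoint-charging argument still goes through, and both requirements have to be met simultaneously in the presence of nested blobs and relative to the particular switching of $N$ that realises $T^{*}$. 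Granting $b\le p$, we conclude $PS(f,T)\le p+kp=(k+1)\cdot PS_{\sw}(f,N)$; summing over the characters of $A$ and using $PS_{\sw}(A,N)\le PS_{\sw'}(A,N)$ then yields both inequalities of the theorem.
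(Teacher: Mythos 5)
There is a genuine gap, and it sits exactly where you place your parenthetical caveat: the network surgery that replaces every non-informative blob by a ``trivial gadget'' to obtain $N'$ with $PS_{\sw}(f,N')=p$, only informative blobs, \emph{and} the property that every tree of $D(N)$ maps to a tree of $D(N')$ with the same parsimony score. No definition of ``non-informative'' is supplied that delivers this, and with the natural definition (the paper's, which is relative to a fixed optimal embedding $S'$ and its minimum extensions) the property fails in spirit: whether the internal resolution of a blob $B$ affects the score is not intrinsic to $N$ and $f$, because the subnetworks hanging below the children of $B$ may contain further blobs that the arbitrary tree $T$ resolves differently from the optimal tree $T^{*}$, so the ``cost functions'' $B$ sees differ between $T$ and $T^{*}$. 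A blob can be non-informative relative to $S^{*}$ yet arrangement-sensitive for $T$; fixing one gadget resolution can then strictly \emph{decrease} the score of the image $T'$ of $T$, and your chain $PS(f,T')\le p+h(N')$ no longer says anything about $PS(f,T)$. If instead you choose the gadget to coincide with $T$'s own resolution (so that $T'=T$), you must re-prove that the image of $T^{*}$ still has score at most $p$, i.e.\ that collapsing non-informative blobs costs nothing even across nested blobs with coordinated states at their sources --- which is precisely the content you would have to develop from scratch. So the reduction to ``all blobs informative'' is a missing idea, not a verification left to the reader; your second flagged worry, $b\le p$, is by contrast unproblematic and is exactly the paper's one-sentence charging argument (each informative blob forces a change on edges between its source and its children, and these edge sets are disjoint for distinct blobs).

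For comparison, the paper never modifies $N$. It proves Lemma~\ref{l:main} by induction on the number of blobs, peeling off a lowest blob via the blob reduction and the cluster-extension bookkeeping of Lemma~\ref{l:cluster-one}, and paying a surcharge of $0$ for a non-informative blob (Lemma~\ref{l:non-informative}) and at most $k$ for an informative one (Lemma~\ref{l:informative}, whose $+k$ comes from the \emph{per-blob} switching distance via Lemma~\ref{l:rSPR-upper-bound} and Lemma~\ref{l:changing-number-rSPR}, rather than your global bound $d_{\rSPR}\le h(N')\le kb$, though the totals agree). The technical device that makes the induction compose --- and that your surgery plan has no analogue of --- is condition (ii) in Lemmas~\ref{l:non-informative}, \ref{l:changing-number-rSPR} and \ref{l:informative}: the constructed extension is forced to agree with the old one at the blob's source, so that the pieces glue into a pair of cluster extensions. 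Your remaining ingredients do match the paper (reduction to a single character, $d_{\rSPR}\le d_{\switch}$, one score unit per rSPR move --- note this is Lemma~\ref{l:changing-number-rSPR} in Section~\ref{sec:informativeblobs}, not Section~\ref{sec:rSPR} --- and the final count of informative blobs against $PS(f,S')$), but as written the proposal does not constitute a proof.
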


\noindent For example, if $N$ is a level-1 network, Theorem~\ref{t:main} implies that the parsimony score of an arbitrary tree displayed by $N$ is at most twice the parsimony score of a tree displayed by $N$ whose parsimony score is equal to $PS_{\sw}(A,N)$. Moreover, we show in Section~\ref{sec:main} that the bound as stated in Theorem~\ref{t:main} is sharp.

The next corollary positively answers the open problem that is detailed in the introduction and that was first posed in~\cite{kelk2019finding}.

\begin{corollary}
For a fixed non-negative integer $k$, let $N$ be a level-$k$ network on $X$, and let $A$ be an alignment of characters on $X$. There exists a polynomial $(k+1)$-approximation algorithm to calculate  $PS_\sw(A,N)$ and $PS_{\sw'}(A,N)$.
\end{corollary}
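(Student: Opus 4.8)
The plan is to turn Theorem~\ref{t:main} directly into an algorithm: produce one arbitrary displayed tree, compute its parsimony score, and return that number. First I would describe the algorithm. Given a level-$k$ network $N$ on $X$ and an alignment $A$ of characters on $X$, pick a switching $R$ of $N$ by selecting, for each reticulation, one of its two incoming reticulation edges arbitrarily; by Observation~\ref{ob:switching} the tree $T=T_R$ obtained from $R$ lies in $D(N)$. Then run the Fitch--Hartigan algorithm on each character $f_i$ of $A$ with respect to $T$ and output $PS(A,T)=\sum_{i=1}^n PS(f_i,T)$ as the estimate of both $PS_\sw(A,N)$ and $PS_{\sw'}(A,N)$.

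Next I would prove that this output $(k+1)$-approximates both quantities. The lower bounds are immediate from the definitions: because $T\in D(N)$, Equation~\eqref{eq:soft-two} gives $PS_{\sw'}(A,N)\le PS(A,T)$, and Equation~\eqref{eq:soft-one} gives $PS_\sw(A,N)=\sum_{i=1}^n\min_{T'\in D(N)}PS(f_i,T')\le\sum_{i=1}^n PS(f_i,T)=PS(A,T)$. For the upper bounds I would invoke Theorem~\ref{t:main} with this particular $T$, which yields $PS(A,T)\le (k+1)\cdot PS_\sw(A,N)$ and $PS(A,T)\le (k+1)\cdot PS_{\sw'}(A,N)$. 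Combining these, $\tfrac{1}{k+1}PS(A,T)\le PS_\sw(A,N)\le PS(A,T)$ and likewise $\tfrac{1}{k+1}PS(A,T)\le PS_{\sw'}(A,N)\le PS(A,T)$, so the returned value $PS(A,T)$ is a $(k+1)$-approximation of each of the two softwired parsimony scores simultaneously.

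Finally I would check the running time. A switching is chosen in time linear in the number of reticulations of $N$, which is at most (number of blobs)$\,\cdot k\le |V(N)|$; constructing $T_R$ from $R$ by deleting the reticulation edges not in $R$ and then suppressing or deleting the appropriate degree-two and unlabelled out-degree-zero vertices is polynomial in $|V(N)|$; and the Fitch--Hartigan algorithm computes each $PS(f_i,T)$ in time polynomial in $|X|$ and the number $r$ of character states, hence $PS(A,T)$ in time polynomial in $|X|$, $r$, and the alignment length $n$. Since $k$ is a fixed non-negative integer, $(k+1)$ is a constant, and the claim follows. I do not expect a genuine obstacle here, since the substantive content is entirely contained in Theorem~\ref{t:main}; the only thing that needs a little care is spelling out that ``take an arbitrary displayed tree and score it'' is really a polynomial-time procedure, which follows from Observation~\ref{ob:switching} together with the classical Fitch--Hartigan algorithm.
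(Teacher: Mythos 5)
Your proposal is correct and follows essentially the same route as the paper: construct an arbitrary tree $T\in D(N)$ (via a switching), compute $PS(A,T)$ with Fitch's algorithm in polynomial time, and combine the trivial lower bounds $PS_{\sw}(A,N)\leq PS(A,T)$ and $PS_{\sw'}(A,N)\leq PS(A,T)$ with the upper bounds from Theorem~\ref{t:main}. The only difference is that you spell out the two-sided approximation inequalities and the running-time details more explicitly than the paper, which simply states that the result follows immediately from Theorem~\ref{t:main}.
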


\begin{proof}
Clearly, we can construct a phylogenetic $X$-tree $T$ such that $T\in D(N)$ in time that is polynomial in $|V(N)|$. Furthermore, it takes time that is polynomial in $|X|$ to  calculate $PS(A,T)$ by applying Fitch's algorithm~\cite{fitch1971toward}. The result now follows immediately from Theorem~\ref{t:main}.
\end{proof}

\section{Bounding the rSPR distance} \label{sec:rSPR}

In this section, we establish an upper bound on the rSPR distance between two phylogenetic trees for when both trees are displayed by a given network. Let $N$ be a phylogenetic network, and let $R$ and $R'$ be two switchings of $N$. We define
$$d_\switch(R,R')=h(N)-|R\cap R'|$$
to be the {\it switching distance} between $R$ and $R'$. Intuitively, $d_\switch(R,R')$ is the number of reticulations in $N$ for which $R$ and $R'$ contain different reticulation edges.\\

The following lemma is a generalisation of~\cite[Lemma 3.1]{docker2024hypercubes}.
\begin{lemma}\label{l:rSPR-upper-bound}
Let $N$ be a phylogenetic network on $X$, and let $T_R$ and $T_{R'}$ be two phylogenetic $X$-trees that are yielded by two switchings $R$ and $R'$, respectively, of $N$. Then $$d_\rSPR(T_R,T_{R'})\leq d_\switch(R,R').$$
\end{lemma}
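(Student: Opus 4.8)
The plan is to induct on $d_{\switch}(R, R')$. If $d_{\switch}(R, R') = 0$, then $R = R'$, so $T_R = T_{R'}$ and $d_{\rSPR}(T_R, T_{R'}) = 0$, establishing the base case. For the inductive step, suppose $d_{\switch}(R, R') = d \geq 1$. Then there exists a reticulation $v$ of $N$ such that $R$ and $R'$ contain different edges directed into $v$; say $(a, v) \in R$ and $(b, v) \in R'$ with $a \neq b$. The key idea is to modify $R$ into a new switching $R''$ that agrees with $R'$ at $v$ but with $R$ everywhere else: formally $R'' = (R \setminus \{(a,v)\}) \cup \{(b,v)\}$. Then $d_{\switch}(R, R'') = 1$ and $d_{\switch}(R'', R') = d - 1$. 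By the inductive hypothesis, $d_{\rSPR}(T_{R''}, T_{R'}) \leq d - 1$, so by the triangle inequality it suffices to show $d_{\rSPR}(T_R, T_{R''}) \leq 1$.

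For this last step I would argue that $T_{R''}$ is obtainable from $T_R$ by a single rSPR move (or no move at all). Both $T_R$ and $T_{R''}$ arise from the same underlying network by switchings that differ only at the reticulation $v$: in $T_R$, the subtree hanging below $v$ (more precisely, the part of $N$ reachable from $v$ after applying the switching, once degree-two vertices are suppressed) is attached at the point corresponding to $a$, whereas in $T_{R''}$ the same subtree is attached at the point corresponding to $b$. Relocating that subtree from one attachment point to another is precisely a prune-and-regraft operation. The care needed here is: (a) handling the suppression of degree-two vertices so that $T_R$ and $T_{R''}$ are genuine phylogenetic $X$-trees and the two embeddings line up correctly; (b) dealing with the edge case where $a$ or $b$ lies on the root path or where, after suppression, the two trees actually coincide (in which case the rSPR distance is $0$, still at most $1$); and (c) confirming that the leaf set reachable below $v$ is the same in both switchings, which follows because only the incoming edge to $v$ was changed and $N$ is acyclic, so no other reticulation's reachability is affected by the swap at $v$.

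The main obstacle I anticipate is part (a): making rigorous the claim that "swapping one reticulation edge corresponds to one rSPR move" once degree-two suppressions intervene. The subtlety is that the pruned subtree in $T_R$ is not literally a subgraph of $N$ but the image of one after suppression, and the regraft location in $T_{R''}$ must be identified with the correct edge of $T_R$ after $u = a$ (or its suppressed image) is removed. I would handle this by working first at the level of the subdivisions $S_R$ and $S_{R''}$ of $T_R$ and $T_{R''}$ that sit inside $N$ — which differ by deleting $(a,v)$ and adding $(b,v)$ — checking that this graph surgery is exactly an rSPR move on subdivisions, and then invoking the fact that suppressing degree-two vertices commutes with rSPR moves (or simply that rSPR distance is invariant under subdivision, which can be checked directly). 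Since this lemma is stated as a generalisation of \cite[Lemma 3.1]{docker2024hypercubes}, I expect the bookkeeping to parallel that argument closely, with the only new ingredient being that here $N$ is an arbitrary phylogenetic network rather than whatever more restricted class appears there.
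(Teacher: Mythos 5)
Your approach is correct, but it is genuinely different from the paper's. The paper argues globally and non-inductively: it takes embeddings $S$ and $S'$ of $T_R$ and $T_{R'}$ inside $N$, forms a network $N'$ from $E(S)\cup E(S')$ by suppressing degree-two vertices and deleting parallel edges, notes that every reticulation edge of $N'$ lies in $R\cup R'$ so that $h(N')\leq d_\switch(R,R')$, and then invokes the known inequality $d_\rSPR(T_R,T_{R'})\leq h(T_R,T_{R'})$ (Equation 10.1 in Semple's hybridisation-networks chapter), which rests on agreement-forest machinery. Your induction on $d_\switch(R,R')$ with the triangle inequality replaces that black box by the elementary claim that two switchings differing at a single reticulation $v$ yield trees at rSPR distance at most one; this claim is true, and the essential reason is the one you identify: only the edge into $v$ changes, so the switching graphs agree both on the part reachable from $v$ and on the part outside it, whence $T_R$ and $T_{R''}$ share the cluster $L$ of leaves below $v$ and have identical subtrees on $L$ and on $X\setminus L$, so one prune-and-regraft of that pendant subtree suffices (or zero moves in the degenerate cases $L=\emptyset$ or $L=X$). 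One bookkeeping point in your plan is stated too optimistically: the embeddings of $T_R$ and $T_{R''}$ in $N$ do not in general differ only by exchanging $(a,v)$ for $(b,v)$; it is the switching graphs, before deleting dead-end parts with no leaves below them, that differ by exactly that swap, and the subsequent deletions around $a$ and $b$ can remove different edge sets from the two embeddings. The robust way to finish is therefore the cluster/restriction argument just described rather than a literal edge-for-edge comparison of subdivisions. What your route buys is self-containedness (no appeal to the rSPR versus hybridisation-number bound) and, as a byproduct, a sequence of intermediate trees each displayed by $N$ interpolating between $T_R$ and $T_{R'}$; what the paper's route buys is brevity, since all of the suppression case analysis is absorbed into the cited inequality.
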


\begin{proof}
Let $S$ (resp. $S'$) be an embedding of $T_R$ (resp. $T_{R'}$) in $N$ whose edge set contains each edge in $R$ (resp. $R'$). Obtain a directed acyclic graph $N'$ from $N$ by deleting each edge that is not contained in $E(S)\cup E(S')$ and, subsequently, applying any of the following two operations until no further operation is possible.
\begin{enumerate}[(i)]
\item Suppress a vertex of in-degree one and out-degree one.
\item If $e$ and $e'$ are two edges in parallel, delete $e'$.
\end{enumerate}
By construction, $N'$ is a phylogenetic network on $X$ and each reticulation edge that is not contained in $R\cup R'$ is deleted in obtaining $N'$. Hence, $h(N')\leq d_\switch(R,R')$. Furthermore, as $S$ and $S'$ are embeddings of $T_R$ and $T_{R'}$, respectively, $T_R$ and $T_{R'}$ are  displayed by $N'$. Now, let $h(T_R,T_{R'})$ be the minimum number of reticulations of any phylogenetic network that displays $T_R$ and $T_{R'}$. Clearly, $h(T_R,T_{R'})\leq h(N')$ and, thus
$$d_\switch(R,R')\geq h(N')\geq h(T_R,T_{R'})\geq d_\rSPR(T_R,T_{R'}),$$ where the last inequality follows from~\cite[Equation 10.1]{Semple:2007ug}. 
\end{proof}

\section{Informative blobs} \label{sec:informativeblobs}
To establish the main result of this paper, we introduce the novel concept of informative and non-informative blobs in this section. After giving a formal definition of these blobs, we establish results related to the changing number of character extensions to embeddings of phylogenetic trees that are displayed by phylogenetic  networks that consist of a single informative or non-informative blob. Let $N$ be a phylogenetic network on $X$, and let $B$ be a blob of $N$. Furthermore, let $C_N(B)$ be the subset of $V(N)$ that contains precisely each vertex that is not in $B$ and that is a child of a vertex of $B$. We refer to $C_N(B)$, as the {\it children} of $B$. As an immediate consequence of the definition of $C_N(B)$, we have the following lemma that we will freely use throughout the remainder of the paper.
\begin{lemma}\label{l:one}
Let $B$ be a blob of a phylogenetic network $N$ on $X$, and let $v$ be a vertex of $C_N(B)$. Furthermore, let $S$ be an embedding of a phylogenetic $X$-tree that is displayed by $N$. Then, $v$ is a vertex of $S$. Moreover, if $v$ is a vertex of a blob $B'$ in $N$, then $v$ is the source of $B'$.
\end{lemma}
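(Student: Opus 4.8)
The plan is to unpack the two claims of Lemma~\ref{l:one} directly from the definitions, since both are essentially structural observations about embeddings. Let $B$ be a blob of $N$ and let $v\in C_N(B)$, so $v\notin B$ and $v$ has a parent $u$ that lies in $B$. Let $S$ be an embedding of a phylogenetic $X$-tree $T$ displayed by $N$; by definition $S$ is a subdivision of $T$ that is a subgraph of $N$ containing the root $\rho$.

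For the first claim, that $v$ is a vertex of $S$, the natural argument is to trace descendant leaves. Every vertex of $N$ lies on at least one directed path from $\rho$ to some leaf $\ell\in X$ (acyclicity plus the leaf set being $X$), and in particular $v$ has a nonempty cluster $\cl_N(v)\subseteq X$. Since $S$ is an embedding, for every leaf $\ell$ the unique $\rho$--$\ell$ path in $S$ is a directed path in $N$. I would argue that any $\rho$--$\ell$ path in $N$ with $\ell$ a descendant of $v$ must pass through $v$: because $v$ is a \emph{child} of a vertex of the blob $B$, and $B$ is a biconnected component, the only way to reach any descendant of $v$ from $\rho$ is via $v$ itself — every such path must leave $B$ through an edge whose head is in $C_N(B)$, and the relevant one is the edge $(u,v)$. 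More carefully: removing $v$ from $N$ disconnects $\cl_N(v)$ from $\rho$, because $v$ is a cut vertex separating the subnetwork below it from the rest (the edges from $B$ to $C_N(B)$ are bridges or, at worst, the structure of a biconnected component forces every exit path through a single child vertex). So picking any $\ell\in\cl_N(v)$ that is also a leaf reached in $S$ — and at least one exists since $\cl_S$ of the root is all of $X$ — the $\rho$--$\ell$ path in $S$ contains $v$.

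For the second claim, suppose $v$ additionally lies in a blob $B'$ of $N$. Since $v\in C_N(B)$, $v$ is not in $B$, so $B'\ne B$; moreover $v$'s parent $u\in B$ is not in $B'$ (a vertex cannot lie in two distinct biconnected components unless it is a cut vertex shared by both, but then the edge $(u,v)$ would belong to neither, and $v$'s other relationships pin it down). The cleanest route is: $v$ has in-degree at most two in $N$; if $v$ is a tree vertex it has in-degree one, and its unique parent $u$ lies in $B$, not $B'$, so the single in-edge of $v$ is not inside $B'$, forcing $v$ to have in-degree zero within $B'$, i.e. $v$ is the source of $B'$. If $v$ is a reticulation with parents $u$ and $u'$, I would argue $u'$ also cannot be in $B'$: the two in-edges of $v$ would then both lie in $B'$, but one of them, $(u,v)$, leads back into the distinct blob $B$, and biconnectivity of $B'$ together with acyclicity would force $B$ and $B'$ to overlap in more than the single vertex $u$ — contradicting maximality of biconnected components. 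Hence again $v$ has in-degree zero within $B'$ and is its source.

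The main obstacle I anticipate is making the cut-vertex / bridge claims about biconnected components fully rigorous rather than hand-wavy: precisely, that the edge from $B$ into $C_N(B)$ behaves like a bridge for the purpose of reaching descendants, and that a child $v$ of a blob vertex cannot sit in the interior of another blob. These follow from the standard theory of blocks in graphs (two distinct blocks share at most one vertex, which is a cut vertex, and every cycle lies within a single block), applied to the underlying undirected graph of $N$, but the write-up needs to invoke that theory cleanly and then combine it with the directedness and the in-degree/out-degree constraints of a binary phylogenetic network. Once that block-decomposition fact is in place, both parts are short.
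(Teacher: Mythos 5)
Your overall shape is reasonable, but the two structural claims you lean on are exactly the nontrivial content of the lemma, and one case is resolved incorrectly. First, your justification that $u\notin B'$ does not work: if $u$ and $v$ both belonged to $B'$, then the edge $(u,v)$ would belong to $B'$ (an edge joining two vertices of a biconnected component lies in that component, by maximality), so your parenthetical ``the edge $(u,v)$ would belong to neither'' is false and produces no contradiction. The clean reason $u$ cannot lie in two distinct blobs is the binary degree bound: distinct biconnected components are edge-disjoint, every vertex of a blob has at least two incident edges inside it, and $u$ has total degree three. Second, and more seriously, your reticulation case reaches the wrong conclusion. If $v\in C_N(B)$ had in-degree two with parents $u$ and $u'$, then two directed root-to-$v$ paths, one through $u$ and one through $u'$, yield an underlying cycle containing the edge $(u,v)$; the blob containing this cycle contains $u$, hence (by the degree argument) is $B$ itself, and would then contain $v$, contradicting $v\notin B$. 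So this case simply cannot occur; it is not the case that $v$ is then the source of $B'$ with both in-edges outside $B'$. That outcome is impossible on its face anyway: the source of a blob has out-degree two inside the blob, whereas a reticulation has out-degree one in $N$, and a reticulation lying in a blob has both in-edges inside that blob. This is essentially how the paper argues the second assertion: assume $v$ is not the source of $B'$, deduce that $v$ must have in-degree two, and derive the contradiction from the two root-to-$v$ paths.

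For the first assertion ($v\in V(S)$), your path-tracing idea is fine in principle, but the statement you need --- that $(u,v)$ is a cut edge, equivalently that every directed root-to-$\ell$ path with $\ell\in\cl_N(v)$ passes through $v$ --- is precisely the bridge claim you yourself flag as hand-wavy, and proving it requires the same blob analysis as above: if $(u,v)$ lay on an underlying cycle, the blob containing that cycle would contain $u$, hence equal $B$, putting $v$ in $B$. Once that is in place, the paper finishes more directly than your write-up: an embedding of a displayed tree must contain every cut edge of $N$ (otherwise some leaf could not be reached from the root inside the embedding), so it contains $(u,v)$ and hence $v$. As it stands, your proposal asserts rather than proves the two facts that carry the lemma, and the reticulation branch would need to be rewritten as a contradiction rather than as a subcase in which $v$ is a source.
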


\begin{proof}
Suppose that $v$ is a vertex of a blob $B'$ in $N$. Since $v\in C_N(B)$, we have $B'\ne B$. Towards a contradiction, assume that $v$ is not the source of $B'$. It follows that $v$ is a vertex of in-degree two. Let $u$ be the parent of $v$ in $B$, and let $u'$ be the other parent of $v$. Then there is a directed path from the root of $N$ to $v$ that traverses $u$ and a directed path from the root of $N$ to $v$ that traverses $u'$, thereby contradicting that $B$ and $B'$ are two distinct blobs in $N$. We complete the proof by noting that $S$ contains each edge of $N$ whose deletion disconnects $N$ into more than one connected component and, so, $v$ is a vertex of $S$.
\end{proof}

Now, let $s$ be the source of a blob $B$ in a phylogenetic network $N$ on $X$. 
Furthermore, let $S$ be an embedding of a phylogenetic $X$-tree that is displayed by $N$. Let $f$ be a character on $X$, and let $F$ be an extension of $f$ to $V(S)$. We set $\ind(F,B,S)=0$ if each element in $C_N(B)$ is assigned to the same character state under $F$ and, otherwise, we set $\ind(F,B,S)=1$. By Lemma~\ref{l:one}, recall that each vertex in $C_N(B)$ is also a vertex of $S$ and, thus, $\ind(F,B,S)$ is well defined.
Moreover, we say that $B$ is a {\it non-informative blob} relative to $S$ and $f$ if there exists an extension $F$ of $f$ to $V(S)$ such that $PS(f,S)=\ch(F,S)$ and $\ind(F,B,S)=0$. Otherwise, we say that $B$ is an {\it informative blob}. 
We next extend the concept of a single informative blob to all blobs $B_1,B_2,\ldots,B_m$ of $N$ and set $$b(f,N,S)=\min\limits_{\genfrac{}{}{0pt}{1}{F_j\text{ such that}}{ PS(f,S)=\ch(F_j,S)}}\left( \sum_{i=1}^m \ind(F_j,B_i,S)\right),$$ 
where the minimum is taken over all extensions $F_j$ of $f$ to $V(S)$ whose changing number is equal to $PS(f,S)$. Then $b(f,N,S)$ denotes the number of informative blobs relative to $S$ and $f$ in $N$. If $F_j$ is an extension of $f$ to $V(S)$ such that $b(f,N,S)=\sum_{i=1}^m \ind(F_j,B_i,S)$, then we say that $F_j$ {\it realises} $b(f,N,S)$. See Figure~\ref{fig:informative-blob} for an example of a phylogenetic network $N$ on $X = \{x_1, x_2, \ldots, x_8\}$ with two blobs $B$ and $B'$, an embedding $S$ of a phylogenetic $X$-tree displayed by $N$, and a binary character $f$ on $X$ such that $b(f,N,S)=1$. Here, $B$ is non-informative because there exists a minimum extension $F$ of $f$ that assigns character state $0$ to all elements of $C_N(B)$, where $C_N(B)$ contains the source of $B'$ and leaves $x_7$ and $x_8$. Blob $B'$, on the other hand, is informative, as the elements in $C_N(B')=\{x_1, x_2,\ldots, x_6\}$ are assigned two different states by $f$ and thus by any extension of it. To see that $F$ is indeed minimum, notice that $\ch(F,S) = \ch(F,T)=1$. This is minimum since $f$ employs two states, and it is well-known and easy to see that, in this case, any extension of $f$ requires at least one change.

\begin{lemma}\label{l:non-informative}
Let $f$ be a character on $X$, and let $N$ be a phylogenetic network on $X$ with a single blob $B$ whose source $s$ is the child of the root. Let $S$ and $S'$ be embeddings of two phylogenetic $X$-trees that are displayed by $N$. Suppose that $B$ is non-informative relative to $S'$ and $f$. Let $F'$ be an extension of $f$ to $V(S')$ with $\ch(F',S')=PS(f,S')$ that assigns the same character state to each vertex in $C_N(B)$. Then there exists an extension $F$ of $f$ to $V(S)$ such that 
\begin{enumerate}[label={\upshape(\roman*)}]
\item $\ch(F,S)=\ch(F',S')$ and
\item $F(s)=F'(s)$.
\end{enumerate}
\end{lemma}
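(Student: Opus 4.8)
The plan is to split each embedding into a part lying inside the blob $B$ (together with the root path, which starts at $\rho$ and, since $\rho$'s only child is $s\in V(B)$, immediately enters $B$) and a \emph{low part} consisting of the vertices of $C_N(B)$ and all of their descendants, together with the single edge of $N$ entering each vertex of $C_N(B)$. The first point I would establish is that this low part is literally the same subgraph of $N$ in $S$ as in $S'$: since $B$ is the only blob of $N$, every reticulation of $N$ lies in $B$, hence every reticulation edge of $N$ lies in $B$; so the switchings producing $S$ and $S'$ only alter the interior of $B$, and one checks that every embedding of a tree in $D(N)$ is forced to contain all of the low part (each of its edges is a bridge of $N$ on a directed path from $\rho$ to a leaf, and no directed path can leave the low part and return to $B$). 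Consequently, $F'$ restricted to the low part can be re-used verbatim as a partial extension on $S$, and it has the same changing number on the low part of $S$ as on that of $S'$.

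The key step is a surgery argument showing that $F'$ is in fact constant, equal to the common state $c$ that it assigns to all of $C_N(B)$, on all of $V(B)\cup\{\rho\}$ appearing in $S'$; in particular $F'(s)=c$. To see this, let $F''$ be the extension of $f$ to $V(S')$ that agrees with $F'$ on the low part and equals $c$ on every vertex of $V(B)\cup\{\rho\}$. Every edge of $S'$ is either an edge of the low part, an edge from a vertex of $B$ to a vertex of $C_N(B)$ (whose two endpoints are then both coloured $c$ by $F''$), or an edge with both endpoints in $V(B)\cup\{\rho\}$; hence the only edges that can contribute to $\ch(F'',S')$ are those of the low part, so $\ch(F'',S')$ equals the changing number of $F'$ on the low part, which is at most $\ch(F',S')$. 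As $F'$ is a minimum extension, this forces the changing number of $F'$ on all edges inside $B$ and on all edges from $B$ to $C_N(B)$ to vanish. Since every vertex of $B$ appearing in $S'$ lies on a directed path from $s$ inside $B$, it follows that $F'$ is constant on these vertices with value $F'(s)$, and since $C_N(B)$ is non-empty (every leaf lies below $B$), that value equals $c$; hence $F'(s)=c$ and $\ch(F',S')=\ch(F'',S')=$ the changing number of $F'$ on the low part.

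Finally I would mirror this construction on $S$: let $F$ agree with $F'$ on the (common) low part and equal $c$ on every vertex of $V(B)\cup\{\rho\}$ that appears in $S$. The same bookkeeping shows that the only edges of $S$ contributing to $\ch(F,S)$ are those of the low part, so $\ch(F,S)$ equals the changing number of $F'$ on the low part, which we just showed is $\ch(F',S')$; this is (i). Since $s\in V(B)$ we get $F(s)=c=F'(s)$, which is (ii), and $F$ is a genuine extension of $f$ because every leaf of $X$ lies in the low part, where $F$ agrees with $F'$. I expect the main obstacle to be making precise the two structural facts the argument rests on: that the low part is common to \emph{every} embedding of a displayed tree (needing that there are no reticulations outside $B$ plus a short argument that these bridge edges cannot be avoided), and that the part of an embedding inside $B$ is reached from $s$ without ever leaving $B$, which is exactly what lets the surgery overwrite it with the constant state $c$ without disturbing anything outside; the corner case in which the root path of an embedding leaves $B$ through a vertex of $C_N(B)$ (so that $s$ has out-degree one in that embedding) is harmless once $F$ and $F''$ are set to $c$ on all of $V(B)\cup\{\rho\}$.
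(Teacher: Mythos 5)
Your proposal is correct and follows essentially the same route as the paper's proof: copy $F'$ on everything below $C_N(B)$ (which, as you note, is common to all embeddings since all reticulations lie in $B$) and assign the common state $c=\alpha$ to every vertex of $B$ and the root path in $S$, then observe that edges in which $S$ and $S'$ differ all lie in $B$ and carry no state change under either extension. The only difference is presentational: your explicit ``surgery'' via $F''$ spells out, using minimality of $F'$, the step the paper states tersely, namely that $F'$ is constant $\alpha$ on all vertices of $S'$ that are ancestors of vertices in $C_N(B)$, so in particular $F'(s)=\alpha$.
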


\begin{proof}
Since $B$ is non-informative, recall that $F'$ exists. Furthermore, by the definition of an embedding, $s$ is the child of the roots of $S$ and $S'$. Let $V$ be the subset of $V(N)$ that precisely contains each vertex that is not a vertex of $B$. Since $B$ is the only blob of $N$, each vertex in $V$ is also a vertex of $S$ and $S'$. Furthermore, each vertex of $S$ or $S'$ that is not in $V$, is a vertex of $B$. 
Since $s$ is the child of the root of $S'$ and $F'$ assigns the same character state, say $\alpha$, to each vertex in $C_N(B)$, it follows that $F'$ also assigns $\alpha$ to each vertex in $V(S')$ that is an ancestor of some vertex in $C_N(B)$. In particular, $F'(s)=\alpha$.

Now, consider $S$. Set $F(u)=F'(u)$ for each vertex $u\in V$ and set $F(u')=\alpha$ for each vertex $u'\in V(S)\setminus V$. By definition of $F$, we again have that each vertex in $V(S)$ that is an ancestor of some vertex in $C_N(B)$ is assigned to $\alpha$ under $F$. Hence, as $F'$ is an extension of $f$ to $V(S')$, $F$ is an extension of $f$ to $V(S)$ with $F(s)=F'(s)=\alpha$; thereby satisfying (ii). 
Moreover, since $S$ and $S'$ are embeddings of two phylogenetic $X$-trees that are displayed by $N$, the edges of $N$ satisfy the following property: If $e=(u,v)$ is an edge of $S'$ (resp. $S$) but not an edge of $S$ (resp. $S'$), then $e$ is an edge of $B$ and, consequently, $F'(u)=F'(v)=\alpha$ (resp. $F(u)=F(v)=\alpha$).
It follows that $\ch(F,S)=\ch(F',S')$ which satisfies (i) and, therefore, completes the proof of the lemma.
\end{proof}

\begin{lemma}\label{l:changing-number-rSPR}
Let $f$ be a character on $X$. Let $T$ and $T'$ be two phylogenetic $X$-trees. 
Furthermore, let $F'$ be an extension of $f$ to $V(T')$. 
Then there exists an extension $F$ of $f$ to $V(T)$ such that
\begin{enumerate}[label={\upshape(\roman*)}]
\item $\ch(F,T) \leq \ch(F',T') + d_\rSPR(T',T)$ and
\item $F(\rho)=F'(\rho')$, where $\rho$ and $\rho'$ is the root of $T$ and $T'$, respectively.
\end{enumerate}
\end{lemma}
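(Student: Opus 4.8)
The plan is to reduce to a single rSPR operation and then iterate along a shortest rSPR sequence. Write $d=d_\rSPR(T',T)$ and fix phylogenetic $X$-trees $T'=T_0,T_1,\dots,T_d=T$ such that, for each $i\in\{1,\dots,d\}$, the tree $T_i$ is obtained from $T_{i-1}$ by one rSPR operation; such a sequence exists by the definition of the rSPR distance. Let $\rho_i$ denote the root of $T_i$. It then suffices to establish the following single-step claim: for every extension $G$ of $f$ to $V(T_{i-1})$ there is an extension $H$ of $f$ to $V(T_i)$ with $\ch(H,T_i)\le\ch(G,T_{i-1})+1$ and $H(\rho_i)=G(\rho_{i-1})$. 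Indeed, starting from $F_0=F'$ and applying the claim successively for $i=1,2,\dots,d$ produces extensions $F_i$ of $f$ to $V(T_i)$ with $\ch(F_i,T_i)\le\ch(F',T')+i$ and $F_i(\rho_i)=F'(\rho')$, and then $F=F_d$ is as required (when $d=0$ we simply take $F=F'$).

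To prove the single-step claim I would follow the definition of the rSPR operation. Let $e=(u,v)$ be the edge of $T_{i-1}$ deleted by the operation, let $p$ be the parent of $u$ (which exists since $e$ is not incident with $\rho_{i-1}$), and let $w$ be the other child of $u$. Let $\mathcal F$ be the forest obtained from $T_{i-1}$ by deleting $e$ and suppressing $u$; it consists of a component $A$ containing $\rho_{i-1}$, in which the two edges $(p,u)$ and $(u,w)$ of $T_{i-1}$ are replaced by a single edge $(p,w)$, together with a component $B$ rooted at $v$. Restricting $G$ to $V(A)$ and to $V(B)$ and using that $G(p)\neq G(w)$ implies $G(p)\neq G(u)$ or $G(u)\neq G(w)$, one obtains $\ch(G|_{V(A)},A)+\ch(G|_{V(B)},B)\le\ch(G,T_{i-1})$. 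By the definition of the rSPR operation, $T_i$ is obtained from $\mathcal F$ by subdividing some edge $(p',w')$ of $A$ with a new vertex $z$ and adding the edge $(z,v)$ (possibly $(p',w')$ is the root edge of $A$, so that $z$ becomes a child of $\rho_{i-1}=\rho_i$; this case needs no separate treatment). Now define $H$ to agree with $G$ on $V(A)\cup V(B)$ and set $H(z)=G(w')$. Then the two edges $(p',z)$ and $(z,w')$ of $T_i$ together contribute exactly what the edge $(p',w')$ contributed in $A$, and the only remaining edge of $T_i$ is $(z,v)$, whose contribution is at most $1$; hence $\ch(H,T_i)\le\ch(G|_{V(A)},A)+\ch(G|_{V(B)},B)+1\le\ch(G,T_{i-1})+1$. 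Since $\rho_i=\rho_{i-1}$ lies in $A$ we have $H(\rho_i)=G(\rho_{i-1})$, and since no leaf value is altered, $H$ is indeed an extension of $f$; this proves the claim.

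I expect the main obstacle to be the bookkeeping in the single-step claim, in particular handling the suppression of $u$ correctly (this is where the triangle inequality for the discrete metric on character states enters) and checking the boundary case in which the reattachment subdivides the root edge of $A$ so that the newly created vertex becomes a child of the root. Neither point is deep, but both must be spelled out carefully, and one should also record that $H$ genuinely restricts to $f$ on $X$ (immediate from the construction, since leaf values are never touched). Everything else is a direct iteration of the claim along the fixed rSPR sequence, exactly as in the first paragraph.
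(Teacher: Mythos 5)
Your proposal is correct and follows essentially the same route as the paper: induction (iteration) along a shortest rSPR sequence, with a single-step claim that one rSPR operation increases the attainable changing number by at most one while preserving the root state. The only difference is local bookkeeping in the single step—you assign the new subdivision vertex the state of the lower endpoint $w'$ of the subdivided edge, which makes the accounting immediate, whereas the paper assigns a most-frequent neighbour state and runs a short three-case analysis; both yield the bound $\ch(F,T)\leq \ch(F',T')+d_\rSPR(T',T)$ with $F(\rho)=F'(\rho')$.
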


\begin{proof}
We show by induction on $d_\rSPR(T',T)$ that there exists an extension $F$ of $f$ to $V(T)$ that satisfies (i)--(ii). Suppose that  $d_\rSPR(T',T) = 1$. Then there exists a single rSPR operation that transforms $T'$ into $T$. Given such an rSPR operation, let $(u',v')$ be the edge of $T'$ that is deleted in the pruning part of the operation. Let $u_p'$ and $u_c' \neq v'$ be the parent and other child of $u'$ in $T'$. Further, let $u$ be the vertex that subdivides an edge, say $(u_p,u_c)$, when reattaching the resulting subtree with root $v'$ such that $(u,v')$ is an edge in $T$. 
Noting that each vertex in $T$ except for $u$ is also a vertex of $T'$, we next obtain an extension $F$ of $f$ to $V(T)$ with no character state transition on the root edge of $T$ as follows: For each vertex $w \neq u$, we set $F(w) = F'(w)$. 
In particular, we have $F(\rho)=F'(\rho')$ and, so, (ii) follows. Moreover, if $u_p=\rho$, we set $F(u) = F(u_p)$. Otherwise, we set $F(u) = \alpha$, where $\alpha$ is a character state that has been assigned to at least one neighbour of $u$ in $T$ under $F$ and there is no other character state that has been assigned to strictly more neighbours of $u$ in $T$ under $F$.  We next show that (i) holds. Consider the edges of $T$. Except for the edges $(u,v')$ used to reattach the subtree with root $v'$, $(u_p',u_c')$ obtained from suppressing $u'$, and $(u_p,u)$ and $(u,u_c)$ obtained from subdividing $(u_p,u_c)$ with $u$, each edge of $T$ is also an edge of $T'$. If  $F(u_p')\ne F(u_c')$, then either $F'(u_p')\ne F'(u')$ or $F'(u')\ne F'(u_c')$. Hence, suppressing $u'$ does not increase the changing number. On the other hand, when assigning a character state to $u$ the changing number may increase. More specifically, we consider three cases. 
First, if $F(u_p)=F(u_c)$, then $F(u)=F(u_p)$ by definition of $F$. Note that $u_p$ may be $\rho$. Thus, there is no character state transition on the two edges $(u_p,u)$ and  $(u,u_c)$, and at most one such transition on the edge $(u,v')$ under $F$ in $T$.  Second, if $|\{F(u_p), F(u_c),F(v')\}|=3$, then there is a character state transition on the edge $(u_p, u_c)$ under $F'$ and we have two character state transitions on the three edges $(u_p,u)$, $(u,u_c)$, and $(u,v')$ under $F$. Third, if $F(u_p)\ne F(u_c)$ and $|\{F(u_p), F(u_c),F(v')\}|=2$, then there is again a character state transition on the edge $(u_p, u_c)$ under $F'$ and we have one character state transition on the three edges $(u_p,u)$, $(u,u_c)$, and $(u,v')$ under $F$.
Hence, regardless of which case applies
\[
\ch(F,T) \leq \ch(F', T') + 1 =  \ch(F', T') + d_\rSPR(T',T);
\]
thereby satisfying (i) for when $d_\rSPR(T',T)=1$. 

Now suppose that $d_\rSPR(T',T) \geq 2$ and that (i)--(ii) are satisfied for all pairs of phylogenetic trees whose rSPR distance is strictly smaller than $ d_\rSPR(T',T)$. Let $T''$ be a phylogenetic $X$-tree  such that $ d_\rSPR(T', T'') = 1 $ and $
d_\rSPR(T'', T) =d_\rSPR(T',T)-1$. Recalling that $F'$ is an extension of $f$ to $V(T')$, it follows from the induction hypothesis, that there is an extension $F''$ of $f$ to $V(T'')$ that satisfies (ii) and  $\ch(F'',T'') \leq \ch(F',T') + 1$. Again, by the induction hypothesis, there exists an extension $F$ of $f$ to $V(T)$ that satisfies (ii) and $\ch(F,T) \leq \ch(F'',T'') + d_\rSPR(T'', T)$. Hence, by combining the two inequalities we obtain 
\begin{align*}
\ch(F,T)&\leq \ch(F'',T'') + d_\rSPR(T'', T)\\
&\leq  \ch(F',T') + 1 + d_\rSPR(T'', T) = \ch(F',T') + d_\rSPR(T',T)
\end{align*}
and $F(\rho)=F'(\rho')$. Hence, $F$ satisfies (i)--(ii). This completes the proof of the lemma.
\end{proof}

\begin{corollary}\label{c:changing-number-rSPR}
Let $f$ be a character on $X$. Let $N$ be a phylogenetic network on $X$ with a single blob, and let $T$ and $T'$ be two phylogenetic $X$-trees displayed by $N$. Furthermore, let $F'$ be an extension of $f$ to $V(T')$. Then there exists an extension $F$ of $f$ to $V(T)$ such that
\begin{enumerate}[label={\upshape(\roman*)}]
\item $\ch(F,T) \leq \ch(F',T') + k$, where $k$ is the level of $N$, and
\item $F(\rho)=F'(\rho')$, where $\rho$ and $\rho'$ is the root of $T$ and $T'$, respectively.
\end{enumerate}
\end{corollary}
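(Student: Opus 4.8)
The plan is to chain together the results already available: Observation~\ref{ob:switching} to pass from displayed trees to switchings, Lemma~\ref{l:rSPR-upper-bound} to bound the rSPR distance by the switching distance, the single-blob hypothesis to bound the switching distance by $k$, and finally Lemma~\ref{l:changing-number-rSPR} to convert the rSPR bound into a bound on the changing number.

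First I would observe that, since $T$ and $T'$ are both displayed by $N$, Observation~\ref{ob:switching} yields switchings $R$ and $R'$ of $N$ with $T_R=T$ and $T_{R'}=T'$. Because $N$ has a single blob $B$ and $N$ is level-$k$, the blob $B$ contains at most $k$ reticulations, and every reticulation of $N$ belongs to $B$; hence $h(N)\leq k$. Therefore
$$d_\switch(R,R')=h(N)-|R\cap R'|\leq h(N)\leq k.$$
Applying Lemma~\ref{l:rSPR-upper-bound} to $R$ and $R'$ then gives $d_\rSPR(T,T')=d_\rSPR(T_R,T_{R'})\leq d_\switch(R,R')\leq k$, and by symmetry of the rSPR distance, $d_\rSPR(T',T)\leq k$ as well.

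Finally, with the extension $F'$ of $f$ to $V(T')$ in hand, I would invoke Lemma~\ref{l:changing-number-rSPR} on the ordered pair $(T',T)$. This produces an extension $F$ of $f$ to $V(T)$ with $F(\rho)=F'(\rho')$, which is (ii), and with
$$\ch(F,T)\leq \ch(F',T')+d_\rSPR(T',T)\leq \ch(F',T')+k,$$
which is (i), completing the proof.

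I do not expect a genuine obstacle here; the corollary is essentially a packaging of the preceding lemmas. The one point worth emphasising is the inequality $h(N)\leq k$, which relies crucially on $N$ having a single blob so that all reticulations are confined to it — for a network with several blobs the switching distance can far exceed $k$, and this is precisely why the statement is restricted to single-blob networks, with the general case to be handled in Sections~\ref{sec:informativeblobs} and~\ref{sec:clusterreduction} by processing one blob at a time.
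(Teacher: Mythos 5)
Your proposal is correct and follows essentially the same route as the paper: obtain switchings yielding $T$ and $T'$, bound $d_\rSPR(T,T')$ by the switching distance via Lemma~\ref{l:rSPR-upper-bound}, use the single-blob hypothesis to bound that distance by $k$, and then apply Lemma~\ref{l:changing-number-rSPR}. Your explicit justification that all reticulations lie in the unique blob (hence $h(N)\leq k$) is a slightly fuller account of the step the paper summarises as ``noting that $N$ has a single blob.''
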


\begin{proof}
Let $R$ and $R'$ be two switchings of $N$ that yield $T$ and $T'$, respectively. By Lemma~\ref{l:rSPR-upper-bound}, we have $d_\rSPR(T,T') \leq d_\switch(R, R')$. Noting that $N$ has a single blob, we have $d_\switch(R, R') \leq k$ and the corollary now follows from Lemma~\ref{l:changing-number-rSPR}. 
\end{proof}

While Lemma~\ref{l:non-informative} is restricted to phylogenetic networks that consist of a single non-informative blob, the next lemma establishes an analogous result for all phylogenetic networks that consist of a single blob.

\begin{lemma}\label{l:informative}
Let $f$ be a character on $X$, and let $N$ be a phylogenetic network on $X$ with a single blob $B$ whose source $s$ is the child of the root. Let $S$ and $S'$ be embeddings of two phylogenetic trees that are displayed by $N$. Furthermore, let $F'$ be an extension of $f$ to $V(S')$.
Then there exists an extension $F$ of $f$ to $V(S)$ such that 
\begin{enumerate}[label={\upshape(\roman*)}]
\item $\ch(F,S) \leq \ch(F',S') + k$, where $k$ is the level of $N$, and 
\item $F(s) = F'(s)$.
\end{enumerate}
\end{lemma}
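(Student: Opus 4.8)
The plan is to transfer the statement from the embeddings $S$ and $S'$ to the phylogenetic trees $T$ and $T'$ of which they are embeddings, invoke Corollary~\ref{c:changing-number-rSPR} there, and then lift the resulting extension back to the subdivision $S$. Concretely, I would first pass from $F'$ to its restriction $G'$ to $V(T')$. Since $S'$ is a subdivision of $T'$, suppressing its degree-two vertices cannot increase the changing number, so $\ch(G',T')\le\ch(F',S')$. Because $T$ and $T'$ are displayed by $N$ and $N$ has a single blob, Corollary~\ref{c:changing-number-rSPR} applied to $G'$ produces an extension $H$ of $f$ to $V(T)$ with $\ch(H,T)\le\ch(G',T')+k$ that agrees with $G'$ on the root of $N$; by the standing convention following Observation~\ref{ob:root-edge} (which the construction behind Corollary~\ref{c:changing-number-rSPR} obeys), $H$ moreover has no state transition on the root edge of $T$.

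Next I would lift $H$ to an extension $F$ of $f$ to $V(S)$: for each edge $(u,v)$ of $T$, the corresponding directed path in $S$ has as interior vertices exactly the degree-two vertices subdividing $(u,v)$; set all of these interior vertices to $H(u)$, and let every vertex in $V(T)\subseteq V(S)$ retain its $H$-value. Then each edge of $T$ and its associated path in $S$ contribute equally to the changing number, so $\ch(F,S)=\ch(H,T)\le\ch(G',T')+k\le\ch(F',S')+k$, which establishes part (i).

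For part (ii), the crucial observation is that the source $s$ lies on the root path of both $S$ and $S'$: in each embedding the root has the single child $s$, and the root path runs from the root down to the first degree-three vertex, so $s$ is either an interior vertex or the bottom endpoint of the root path in each case. Hence $F'$ is constant along the root path of $S'$ by the standing convention, giving $F'(s)=F'(\rho')$, and since $H$ has no transition on the root edge of $T$ it lifts to an $F$ that is constant along the root path of $S$, giving $F(s)=H(\rho)$. Chaining these with $H(\rho)=G'(\rho')=F'(\rho')$ yields $F(s)=F'(s)$. The one place that needs care is precisely this bookkeeping around $s$: the source need not survive the suppression that turns an embedding into a phylogenetic tree, so one cannot simply read $F(s)$ off the tree $T$; locating $s$ on the root path sidesteps this, and, notably, the whole argument requires no case distinction on whether $B$ is informative or non-informative.
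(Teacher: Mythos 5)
Your argument is correct and is essentially the paper's own proof: restrict $F'$ to the tree $T'$ underlying $S'$ (noting the changing number does not increase and the root keeps the value $F'(s)$ thanks to the convention after Observation~\ref{ob:root-edge}), apply Corollary~\ref{c:changing-number-rSPR} to transfer to $T$, and lift the resulting extension back to $S$ along the subdividing paths. Your explicit bookkeeping that $s$ lies on the root path of both embeddings, so that constancy along root paths yields $F(s)=F'(s)$, is exactly how the paper secures part (ii), and like the paper you need no case distinction on whether $B$ is informative.
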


\begin{proof}
Let $T$  (resp. $T'$) be the two phylogenetic $X$-trees such that $S$ (resp. $S'$) is an embedding of $T$ (resp. $T'$) in $N$. 
Furthermore, let $F'(s)=\alpha$. Observe that each vertex in $T'$ is a unique degree-three vertex in $S'$. First, let $F'_{T'}$ be the extension of $f$ to $V(T')$ obtained from $F'$ by setting $F'_{T'}(w) = F'(w)$ for each $w \in V(T')$.
Then
\begin{equation}\label{eq:a}
\ch(F'_{T'}, T') \leq \ch(F', S'),
\end{equation}
and, because there is no character state transition on any edge of the root path of $S'$ that contains $s$, the root of $T'$ is assigned to $\alpha$.
Second, by Corollary~\ref{c:changing-number-rSPR}, there exists an extension $F_T$ of $f$ to $V(T)$ such that 
\begin{equation}\label{eq:b}
\ch(F_{T}, T) \leq \ch(F'_{T'},T') + k
\end{equation}
and the root of $T$ is also assigned to $\alpha$.
Third, we obtain an extension $F$ of $f$ to $V(S)$ from $F_T$ as follows. Noting that each edge $(v,v')$ in $T$ corresponds to a unique directed path $v=v_1,v_2,\ldots,v_s=v'$ in $S$ whose non-terminal vertices all have degree two, we set $F(v_1)=F(v_2)=\cdots=F(v_{s-1})=F_T(v)$ and $F(v_s)=F_T(v')$. 
Then
\begin{equation}\label{eq:c}
\ch(F, S)=\ch(F_{T}, T),
\end{equation}
and, since the root of $T$ is assigned to $\alpha$, it follows that each vertex on the root path of $S$, in particular $s$, is also assigned to $\alpha$. Hence (ii) is satisfied.
Moreover, by combining Equations~(\ref{eq:a})--(\ref{eq:c}), we have
\[
\ch(F,S) = \ch(F_{T}, T) \leq \ch(F'_{T'},T') + k \leq \ch(F', S') + k.
\]
This concludes the proof of the lemma.
\end{proof}

\section{Blob reduction} 
\label{sec:clusterreduction}
In this section, we introduce the notion of a blob reduction. Intuitively, this allows us to decompose a phylogenetic network $N$ into two smaller phylogenetic networks and calculate the parsimony score of an embedding $S$ of a phylogenetic $X$-tree displayed by $N$ based on these two smaller networks.

Let $N$ be a phylogenetic network on $X$, let $B$ be a blob of $N$ whose source, say $s$, has maximum distance from the root of $N$, and let $Y=\cl(s)$. For some $y\notin X$, the {\it blob reduction} of $B$ reduces $N$ to two smaller phylogenetic networks as follows. Let $N(\bar Y)$ be the phylogenetic network on $(X\setminus Y)\cup\{y\}$ that is obtained from $N$ by replacing the subnetwork of $N$ that is rooted at $s$ with a single new leaf $y$. Furthermore, let $N(Y)$ be the phylogenetic network on $Y$ that is obtained from the subnetwork of $N$ that is rooted at $s$ by adding a new vertex $\rho_Y$ and edge $(\rho_Y,s)$. 
By construction, each of $N(\bar Y)$ and $N(Y)$ contains at least one leaf.

Now, let $S$ be an embedding of a phylogenetic $X$-tree $T$ that is displayed by $N$. Recall that $s$ is a vertex of $S$ and $\cl(s)=Y$. 
Let $f$ be a character on $X$, and let $F$ be an extension of $f$ to $V(S)$. Using the aforementioned blob reduction of $B$ as a guide, we next also reduce $S$ to two smaller trees such that one of the resulting trees is an embedding of a subtree of $T$ in $N(\bar Y)$ and the other one is an embedding of another subtree of $T$ in $N(Y)$. More specifically, let $S(\bar Y)$ be the tree with leaf set $(X\setminus Y)\cup\{y\}$ that is obtained from $S$ by replacing the subtree of $S$ that is rooted at $s$ with a single new leaf $y$. Furthermore, let $S(Y)$ be the tree with leaf set $Y$ that is obtained from the subtree of $S$ that is rooted at $s$ by adding a new vertex $\rho_Y$ and edge $(\rho_Y,s)$. We call $(S(Y), S(\bar{Y}))$ the {\it cluster tree pair} of $S$ relative to $B$.
Let $f_Y$ and $f_{\bar{Y}}$ be a character on $Y$ and on $(X\setminus Y)\cup\{y\}$, respectively, such that $f_Y(\ell)=f(\ell)$ for each $\ell\in Y$ and $f_{\bar{Y}}(\ell')=f(\ell')$ for each $\ell'\in X\setminus Y$. We refer to extensions $F_Y$ of $f_Y$ to $V(S(Y))$ and $F_{\bar{Y}}$ of $f_{\bar Y}$ to $V(S(\bar Y))$ as a {\it pair of cluster extensions with respect to $f$} if $F_{\bar Y}(y)=F_Y(\rho_Y)$. 
Except for $f_{\bar{Y}}(y)$, observe that $f$ uniquely determines the character state of each leaf in $S(Y)$ and $S(\bar{Y})$. Moreover, since the root path of $S(Y)$ contains at least one edge, the definition of a pair of cluster extensions implies that $F_Y(\rho_Y)=F_Y(s)$ by our assumption following Observation~\ref{ob:root-edge}. The next lemma shows how the changing number of extensions of  characters $f$, $f_Y$, and $f_{\bar Y}$ to $V(S)$,  $V(S(Y))$, and $V(S(\bar Y))$, respectively, are related to each other.

\begin{lemma}\label{l:cluster-one}
Let $B$ be a blob of a phylogenetic network $N$ on $X$ to which the blob reduction can be applied. Let $f$ be a character on $X$, and let $S$ be an embedding of a phylogenetic $X$-tree that is displayed by $N$. Furthermore, let $(S(Y), S(\bar Y))$ be the cluster tree pair of $S$ relative to $B$. 
Then, the following two statements hold.
\begin{enumerate}[label={\upshape(\roman*)}]
   \item If $F$ is an extension of $f$ to $V(S)$, then there exists a pair of cluster extensions $(F_Y, F_{\bar Y})$ such that  \[\ch(F,S) = \ch(F_Y, S(Y)) + \ch(F_{\bar{Y}}, S(\bar{Y})).\]
   \item If $(F_Y, F_{\bar Y})$ is a pair of cluster extensions with respect to $f$, then there exists an extension $F$ of $f$ to $V(S)$ such that \[\ch(F, S) = \ch(F_Y, S(Y)) + \ch(F_{\bar{Y}}, S(\bar{Y})).\] 
\end{enumerate}
\end{lemma}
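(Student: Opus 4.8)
The plan is to prove both directions by exploiting the structure of the blob reduction: the tree $S(\bar Y)$ is obtained from $S$ by collapsing the whole subtree rooted at $s$ into a single leaf $y$, and $S(Y)$ is exactly that subtree with a new pendant root edge $(\rho_Y,s)$ attached. Crucially, $V(S)$ decomposes (up to the identification of $s$ with itself and $y$ with a leaf) as $V(S(\bar Y)\setminus\{y\})\,\cup\,V(S(Y)\setminus\{\rho_Y\})$, and $E(S)$ decomposes as the edges of $S(\bar Y)$ not incident with $y$, plus the edges of $S(Y)$ not incident with $\rho_Y$, plus the single edge entering $s$ — which in $S(\bar Y)$ is the edge entering $y$. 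So the changing number splits cleanly once we keep track of the state at $s$.

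For statement (i): given an extension $F$ of $f$ to $V(S)$, define $F_{\bar Y}$ on $V(S(\bar Y))$ by restricting $F$ to the vertices outside the subtree rooted at $s$ and setting $F_{\bar Y}(y)=F(s)$; define $F_Y$ on $V(S(Y))$ by restricting $F$ to the subtree rooted at $s$ and setting $F_Y(\rho_Y)=F(s)$. First I would check these are genuine extensions of $f_{\bar Y}$ and $f_Y$ respectively (immediate, since $f_Y,f_{\bar Y}$ agree with $f$ on their shared leaves, and $f_{\bar Y}(y)$ is unconstrained) and that $F_{\bar Y}(y)=F_Y(\rho_Y)=F(s)$, so they form a pair of cluster extensions. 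Then I would match up edges: the edge $(p,s)$ of $S$ (where $p$ is the parent of $s$) corresponds to the edge $(p,y)$ of $S(\bar Y)$, and $F(p)\ne F(s)\iff F_{\bar Y}(p)\ne F_{\bar Y}(y)$; the edge $(\rho_Y,s)$ of $S(Y)$ contributes zero to $\ch(F_Y,S(Y))$ since $F_Y(\rho_Y)=F_Y(s)=F(s)$ by construction; and every other edge of $S$ lies in exactly one of the two pieces with the same endpoint states. Summing gives $\ch(F,S)=\ch(F_{\bar Y},S(\bar Y))+\ch(F_Y,S(Y))$, as required.

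For statement (ii): given a pair of cluster extensions $(F_Y,F_{\bar Y})$ with $F_{\bar Y}(y)=F_Y(\rho_Y)$, I would glue them back into an extension $F$ of $f$ to $V(S)$ by setting $F(w)=F_{\bar Y}(w)$ for $w$ outside the subtree rooted at $s$ and $F(w)=F_Y(w)$ for $w$ in that subtree; this is well-defined and consistent at $s$ precisely because the root-path assumption following Observation~\ref{ob:root-edge} forces $F_Y(\rho_Y)=F_Y(s)$, and $F_{\bar Y}(y)=F_Y(\rho_Y)$, so $F(s)=F_{\bar Y}(y)=F_Y(s)$ unambiguously. The edge-matching is the same as before — the $(\rho_Y,s)$ edge of $S(Y)$ is a non-changing edge and the edge entering $s$ in $S$ is accounted for by the edge entering $y$ in $S(\bar Y)$ — so again $\ch(F,S)=\ch(F_Y,S(Y))+\ch(F_{\bar Y},S(\bar Y))$.

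The only genuinely delicate point, and the one I would write most carefully, is the bookkeeping at the interface vertex $s$: one must be sure that the edge entering $s$ is counted exactly once (it appears as the $y$-edge of $S(\bar Y)$, not at all in $S(Y)$ beyond the always-zero root edge) and that the states $F(s)$, $F_Y(\rho_Y)=F_Y(s)$, and $F_{\bar Y}(y)$ are all forced to coincide. Everything else is a routine partition of the edge set, so I expect this interface argument, together with a clean statement of which edges of $S$ land in which piece, to be the main (and essentially only) obstacle.
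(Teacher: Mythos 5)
Your proposal is correct and follows essentially the same route as the paper's proof: restrict $F$ to the two pieces (identifying $s$ with both the child of $\rho_Y$ in $S(Y)$ and the leaf $y$ in $S(\bar Y)$) for (i), glue the pair back together for (ii), and do the edge bookkeeping at the interface, with the root edge $(\rho_Y,s)$ contributing zero and the edge into $s$ matched with the edge into $y$. The only difference is cosmetic: the paper phrases the interface via an explicit vertex correspondence rather than your subtree/complement description, but the argument is the same.
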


\begin{proof}
Let $s$ be the source of $B$, and let $Y=\cl_N(s)$. By the definition of a cluster tree pair, $S(Y)$ has leaf set $Y$ and root $\rho_Y$, and $S(\bar Y)$ has leaf set $(X\setminus Y)\cup\{y\}$ and root $\rho$, where $\rho$ is also the root of $S$. Lastly, as $s$ is a vertex of $S$, it follows from the construction of $S(Y)$ and $S(\bar Y)$ that $s$ corresponds to the child of $\rho_Y$ in $S(Y)$ and  to $y$ in $S(\bar Y )$, whereas each other vertex of $S$ corresponds to a unique vertex in either $S(Y)$ or $S(\bar Y)$. To ease reading, we refer to the child of $\rho_Y$ in $S(Y)$ as $s_Y$. Reversely, the only vertex of $S(Y)$ and $S(\bar Y)$ that does not correspond to a unique vertex in $S$ is $\rho_Y$. We next show that (i) and (ii) hold.

First, let $F$ be an extension of $f$ to $V(S)$. Obtain a pair of cluster extensions $F_Y$ and $F_{\bar{Y}}$ of characters $f_Y$ and $f_{\bar{Y}}$ to $V(S(Y))$ and $V(S(\bar{Y}))$, respectively, in the following way. For each vertex $w$ of $V(S(Y))\setminus \{\rho_Y\}$, set $F_Y(w)=F(w')$, where $w'$ is the vertex of $S$ that $w$ corresponds to, and set $F_Y(\rho_Y)=F_Y(s_Y)$. Similarly, for each vertex $w$ of $V(S(\bar{Y}))$, set $F_{\bar{Y}}(w)=F(w')$, where $w'$ is the vertex of $S$ that $w$ corresponds to. Since $y$ and $s_Y$ both correspond to $s$, it follows that  $F_{\bar Y}(y)=F_Y(\rho_Y)$. It is is now easily checked that $F_Y$ and $F_{\bar{Y}}$ is a pair of cluster extensions with respect to $f$ and that $$\ch(F,S)= \ch(F_Y,S(Y))+\ch(F_{\bar{Y}},S(\bar{Y})).$$ Hence, (i) holds.

Now, let $F_Y$ and $F_{\bar{Y}}$ be a pair of cluster extensions with respect to $f$. In particular, $F_Y(\ell)=f(\ell)$ for each $\ell\in Y$, $F_{\bar{Y}}(\ell)=f(\ell)$ for each $\ell\in X\setminus Y$, and $F_{\bar Y}(y)=F_Y(\rho_Y)$. Now obtain an extension $F$ of $f$ to $V(S)$ from $F_Y$ and $F_{\bar{Y}}$ in the following way. For each vertex $w'$ of $V(S)$ that corresponds to a vertex $w$ of $S(Y)$, set $F(w')=F_Y(w)$ and, for each vertex $w'$of $V(S)\setminus \{s\}$ that corresponds to a vertex $w$ of $S(\bar{Y})$, set $F(w')=F_{\bar{Y}}(w)$.
Since $F_Y(s_y)=F_Y(\rho_Y)$, it follows that  $$\ch(F,S) = \ch(F_Y,S(Y))+\ch(F_{\bar{Y}},S(\bar{Y})).$$ Thus, (ii) holds as well.
\end{proof}

\section{Proof of Theorem~\ref{t:main}}\label{sec:main}
In this section, we establish the proof of Theorem~\ref{t:main} and show that the bound that is given in the theorem is sharp. 
Most work in proving Theorem~\ref{t:main} goes into establishing the following lemma. 

\begin{lemma}\label{l:main}
Let $f$ be a character on $X$. Let $N$ be a phylogenetic network on $X$, and let $S$ and $S'$ be embeddings of two phylogenetic $X$-trees that are displayed by $N$. Furthermore, let $F'$ be an extension of $f$ to $V(S')$ that realises $b(f,N,S')$. 
Then there exists an extension $F$ of $f$ to $V(S)$ such that 
$$\ch(F,S) \leq \ch(F',S') + k\cdot b(f,N,S'),$$
where $k$ is the level of $N$.
\end{lemma}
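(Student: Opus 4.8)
The plan is to induct on the number $m$ of blobs of $N$, using the blob reduction from Section~\ref{sec:clusterreduction} to peel off one blob at a time. For the base case $m=0$, the network $N$ is itself a phylogenetic $X$-tree, so $S=S'=N$ (up to subdivision), $b(f,N,S')=0$, and we may simply take $F=F'$; the inequality holds trivially. For the inductive step, assume the statement holds for all networks with fewer than $m$ blobs and let $N$ have $m\geq 1$ blobs.

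For the inductive step I would pick a blob $B$ of $N$ whose source $s$ has maximum distance from the root, so that the subnetwork of $N$ rooted at $s$ contains no other blob of $N$ except $B$ --- more precisely, the only blob inside $N(Y)$ is $B$, and in fact $N(Y)$ has a single blob whose source is (after adding $(\rho_Y,s)$) the child of its root. Set $Y=\cl(s)$ and form the blob reductions $N(\bar Y)$ and $N(Y)$, together with the cluster tree pairs $(S(Y),S(\bar Y))$ of $S$ and $(S'(Y),S'(\bar Y))$ of $S'$. Apply Lemma~\ref{l:cluster-one}(i) to $F'$ to obtain a pair of cluster extensions $(F'_Y,F'_{\bar Y})$ with $\ch(F',S')=\ch(F'_Y,S'(Y))+\ch(F'_{\bar Y},S'(\bar Y))$. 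Now split into two cases according to whether $B$ is informative relative to $S'$ and $f$ under the chosen minimum extension $F'$. If $B$ is non-informative (i.e. $\ind(F',B,S')=0$, which we can arrange since $F'$ realises $b(f,N,S')$ and we may reorder so that a non-informative $B$ contributes $0$), then $F'$ assigns the same state to all of $C_N(B)$; here I would apply Lemma~\ref{l:non-informative} to the single-blob network $N(Y)$ to transfer $F'_Y$ to an extension $F_Y$ on $S(Y)$ with $\ch(F_Y,S(Y))=\ch(F'_Y,S'(Y))$ and $F_Y(s)=F'_Y(s)$. If $B$ is informative, then instead apply Lemma~\ref{l:informative} to $N(Y)$ to get $F_Y$ on $S(Y)$ with $\ch(F_Y,S(Y))\leq \ch(F'_Y,S'(Y))+k$ and $F_Y(s)=F'_Y(s)$, paying the extra cost of $k$ exactly when the blob is informative.

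In both cases $F_Y(s)=F'_Y(s)=F'_{\bar Y}(y)$, so after replacing the leaf-label of $y$ by this common state we may regard $F'_{\bar Y}$ as a minimum-type extension on $S'(\bar Y)$ and invoke the induction hypothesis on $N(\bar Y)$ (which has $m-1$ blobs, each informative blob of $N(\bar Y)$ corresponding to an informative blob of $N$ other than $B$): this yields $F_{\bar Y}$ on $S(\bar Y)$ with $\ch(F_{\bar Y},S(\bar Y))\leq \ch(F'_{\bar Y},S'(\bar Y))+k\cdot b(f,N(\bar Y),S'(\bar Y))$ and $F_{\bar Y}(y)=F'_{\bar Y}(y)=F_Y(s)$. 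Thus $(F_Y,F_{\bar Y})$ is a pair of cluster extensions for $S$, and Lemma~\ref{l:cluster-one}(ii) glues them into an extension $F$ of $f$ to $V(S)$ with $\ch(F,S)=\ch(F_Y,S(Y))+\ch(F_{\bar Y},S(\bar Y))$. Adding the bounds gives $\ch(F,S)\leq \ch(F',S')+k\cdot b(f,N(\bar Y),S'(\bar Y))+k\cdot[\text{$1$ if $B$ informative, else $0$}]$, and since $b(f,N(\bar Y),S'(\bar Y))+\ind(F',B,S')=b(f,N,S')$ (the informative blobs of $N$ are exactly $B$, if informative, together with the informative blobs of $N(\bar Y)$), this is exactly $\ch(F',S')+k\cdot b(f,N,S')$.

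The main obstacle I expect is the bookkeeping that makes the induction actually close: I need to verify that $F'$ restricted via the cluster reduction really does realise $b(f,N(\bar Y),S'(\bar Y))$ (or at least does no worse than $b$), and that the notion of "informative" is preserved under blob reduction --- i.e. that a blob $B_i\neq B$ is informative in $N$ relative to $S'$ iff its image is informative in $N(\bar Y)$ relative to $S'(\bar Y)$. This hinges on $C_N(B_i)$ being unaffected by the reduction (it is, since $B_i$ lies in $N(\bar Y)$ and its children are the same vertices, with $y$ possibly playing the role of $s$), and on the changing-number split of Lemma~\ref{l:cluster-one} interacting correctly with the minimum-changing-number constraint. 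A secondary subtlety is ensuring that when we replace the label of $y$ we do not inadvertently lower $PS(f_{\bar Y},S'(\bar Y))$ below what the induction hypothesis expects; Observation~\ref{ob:root-edge} and the fact that $F_Y(\rho_Y)=F_Y(s)$ should handle the root-path condition. Everything else is a routine assembly of the already-established single-blob lemmas.
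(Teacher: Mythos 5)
Your proposal follows essentially the same route as the paper's proof: induction on the number of blobs, reducing at a blob whose source is furthest from the root, splitting $F'$ via Lemma~\ref{l:cluster-one}(i), handling the single-blob network $N(Y)$ with Lemma~\ref{l:non-informative} or Lemma~\ref{l:informative} according to whether the blob is informative, invoking the induction hypothesis on $N(\bar Y)$, and regluing with Lemma~\ref{l:cluster-one}(ii) together with the same accounting of informative blobs. The bookkeeping points you flag (that the restricted extension realises $b(g_{\bar Y},N(\bar Y),S'(\bar Y))$ and that informativeness is preserved under the reduction, with the leaf $y$ inheriting the state $F'(s)$) are exactly the observations the paper makes to close the induction, so the proposal is correct.
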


\begin{proof} 
Let $B_1, B_2, \ldots, B_m$ be the blobs of $N$. The proof is by induction on $m$. If $m=0$, then $N$ is a phylogenetic tree with $k=0$ and so the result clearly follows since $S=S'$ and, therefore, 
\[
\ch(F,S)\leq \ch(F',S')+0
\]
when setting $F = F'$. Now assume that $m\geq 1$ and that the statement is true for all phylogenetic networks with at most $m-1$ blobs. Let $B$ be a blob of $N$ whose source $s$ has maximum distance from the root of $N$ over all its blobs, and let $Y = \cl_N(s)$. Without loss of generality, we assume that $B=B_m$.

For some $y\notin X$, let $N(\bar Y)$ be the phylogenetic network on $(X \setminus Y) \cup \{y\}$, and let $N(Y)$ be the phylogenetic network on $Y$ and with root $\rho_Y$ resulting from $N$ by applying a blob reduction to $B_m$.
Notice that by construction, $N(Y)$ consists of the single blob $B_m$ with $s$ being the child of  $\rho_Y$, whereas $N(\bar Y)$ contains precisely $m-1$ blobs. Moreover, let $(S(Y), S(\bar Y))$ be the cluster tree pair of $S$ relative to $B_m$, and let $(S'(Y), S'(\bar Y))$ be the cluster tree pair of $S'$ relative to $B_m$. Since $s$ is a vertex of $S$ and $S'$, $s$ is also a vertex of $S(Y)$ and $S'(Y)$.  Now, by Lemma~\ref{l:cluster-one}, Part (i), there exists a pair of cluster extensions $(F'_Y, G'_{\bar Y})$ with respect to $f$ such that
\begin{equation}\label{eq:cluster-reduction-optimal-ext}
\ch(F',S') = \ch(F'_Y, S'(Y)) + \ch(G'_{\bar{Y}}, S'(\bar{Y}))
\end{equation}
and $G'_{\bar Y}(y) = F'_Y(\rho_Y) = F'_Y(s)$.  Let $f_Y$ and $g_{\bar Y}$ be the characters on $Y$ and $(X\setminus Y)\cup\{y\}$, respectively, such that $F'_Y$ and $G'_{\bar{Y}}$ are extensions of $f_Y$ and $g_{\bar{Y}}$, respectively. 

We next consider $N(\bar{Y})$ and start by making two observations.
First, since $\ch(F',S') = PS(f,S')$, it follows from Lemma~\ref{l:cluster-one}, Parts (i) and (ii), that $\ch(G'_{\bar{Y}}, S'(\bar{Y})) = PS(g_{\bar Y},S'(\bar Y))$. Second, by the construction of the pair of cluster extensions in Lemma~\ref{l:cluster-one} Part (i), we may assume that, for each vertex $w$ of $V(S'(\bar{Y}))$, we have $G'_{\bar Y}(w) = F'(w')$, where $w'$ is the vertex of $S'$ that $w$ corresponds to. Then, as $F'$ realises $b(f,N,S')$, $G'_{\bar Y}$ realises $b(g_{\bar{Y}},N(\bar{Y}),S'(\bar{Y}))$.
Noting that $N(\bar{Y})$ has $m-1$ blobs and level at most $k$, we now apply the induction hypothesis to obtain an extension $G_{\bar{Y}}$ of $g_{\bar Y}$ to $V(S(\bar{Y}))$ that satisfies
\begin{equation}\label{eq:induction_upper_part}
\ch(G_{\bar{Y}},S(\bar{Y})) \leq \ch(G'_{\bar{Y}},S'(\bar{Y})) + k\cdot b(g_{\bar{Y}},N(\bar{Y}),S'(\bar{Y}))
\end{equation}
such that $G_{\bar Y} (y) = g_{\bar Y}(y) = G'_{\bar Y} (y) = F'_Y(s)$. 

To complete the proof, we consider $N(Y)$. Here, we distinguish two cases depending on whether its single blob $B_m$ whose level is at most $k$ is informative or non-informative.

First, assume that $B_m$ is informative. By Lemma~\ref{l:informative}, there exists an extension $F_Y$ of $f_Y$ to $V(S(Y))$ such that
\begin{equation}\label{eq:lower_part_informative}
    \ch(F_Y,S(Y)) \leq \ch(F'_Y,S'(Y)) +k \quad \text{and} \quad F_Y(s) = F'_Y(s).
\end{equation}
Since $F_Y(s) = F'_Y(s) = G_{\bar Y}(y)$, the pair $(F_Y, G_{\bar Y})$ is a pair of cluster extensions with respect to $f$. Thus, by Lemma~\ref{l:cluster-one}, Part (ii), there exists an extension $F$ of $f$ to $V(S)$ such that
\[ \ch(F,S) = \ch(F_Y, S(Y)) + \ch(G_{\bar Y}, S(\bar Y)).\]
Now, using Inequalities \eqref{eq:induction_upper_part} and \eqref{eq:lower_part_informative}, we obtain
\begin{align*}
    \ch(F,S) &= \ch(F_Y, S(Y)) + \ch(G_{\bar Y}, S(\bar Y)) \\
    &\leq  \ch(F'_Y, S'(Y)) +k +  \ch(G'_{\bar{Y}},S'(\bar{Y})) + k\cdot b(g_{\bar{Y}},N(\bar{Y}),S'(\bar{Y})) \\
    &= \ch(F'_Y, S'(Y)) + \ch(G'_{\bar{Y}},S'(\bar{Y})) + k \cdot (1+b(g_{\bar{Y}},N(\bar{Y}),S'(\bar{Y}))) \\
    &= \ch(F',S') + k \cdot b(f,N,S'),
\end{align*}
where the last equality follows from Equation~\eqref{eq:cluster-reduction-optimal-ext} and the fact that $B_m$ is informative.

Second, assume that $B_m$ is non-informative. Then by Lemma~\ref{l:non-informative}, there exists an extension $F_Y$ of $f_Y$ to $V(S(Y))$ such that
\begin{equation}\label{eq:lower_part_noninformative}
\ch(F_Y,S(Y)) = \ch(F'_Y,S'(Y)) \quad \text{and} \quad F_Y(s) = F'_Y(s). 
\end{equation}
The remainder of the proof is now similar to the first case. In particular, noting that the pair $(F_Y, G_{\bar Y})$ is a pair of cluster extensions with respect to $f$, by Lemma~\ref{l:cluster-one}, Part(ii), there exists an extension $F$ of $f$ to $V(S)$ such that
\begin{align*}
     \ch(F,S) &= \ch(F_Y, S(Y)) + \ch(G_{\bar Y}, S(\bar Y)).
\end{align*}
Using Inequalities in \eqref{eq:induction_upper_part} and \eqref{eq:lower_part_noninformative}, we obtain
\begin{align*}
    \ch(F,S) &= \ch(F_Y, S(Y)) + \ch(G_{\bar Y}, S(\bar Y)) \\
    &\leq  \ch(F'_Y, S'(Y))  +  \ch(G'_{\bar{Y}},S'(\bar{Y})) + k\cdot b(g_{\bar{Y}},N(\bar{Y}),S'(\bar{Y})) \\
    &= \ch(F',S') + k \cdot b(f,N,S'),
\end{align*}
where the last equality follows from Equation~\eqref{eq:cluster-reduction-optimal-ext} and the fact that $B_m$ is non-informative.

In both cases, we obtain an extension $F$ of $f$ to $V(S)$ such that 
\[
\ch(F,S) \leq \ch(F',S') + k\cdot b(f,N,S').
\]
This concludes the proof of the lemma.
\end{proof}

\begin{corollary}\label{c:main}
Let $f$ be a character on $X$. Let $N$ be a phylogenetic network on $X$, and let $S$ and $S'$ be embeddings of two phylogenetic $X$-trees displayed by $N$ such that $PS_{\sw}(f,N)=PS(f,S')$. Then $$PS(f,S)\leq PS(f,S')+k\cdot b(f,N,S'),$$ where $k$ is the level of $N$.
\end{corollary}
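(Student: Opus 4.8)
The plan is to obtain the statement as an essentially immediate consequence of Lemma~\ref{l:main}. First I would fix an extension $F'$ of $f$ to $V(S')$ that realises $b(f,N,S')$; such an extension exists by the definition of $b(f,N,S')$, and, again by that definition, it satisfies $\ch(F',S') = PS(f,S')$, since the minimum defining $b(f,N,S')$ ranges only over extensions of $f$ to $V(S')$ whose changing number equals $PS(f,S')$.

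Next I would apply Lemma~\ref{l:main} to $N$, $S$, $S'$, and $F'$. This produces an extension $F$ of $f$ to $V(S)$ with $\ch(F,S) \leq \ch(F',S') + k \cdot b(f,N,S')$, where $k$ is the level of $N$. Substituting $\ch(F',S') = PS(f,S')$ from the first step gives $\ch(F,S) \leq PS(f,S') + k \cdot b(f,N,S')$. Finally, since $PS(f,S)$ is by definition the minimum changing number over all extensions of $f$ to $V(S)$, we have $PS(f,S) \leq \ch(F,S)$, and chaining the two inequalities yields $PS(f,S) \leq PS(f,S') + k \cdot b(f,N,S')$, as required.

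I do not expect any genuine obstacle here: all of the substantial work has been carried out in Lemma~\ref{l:main}, and the corollary is just a matter of unwinding the definitions of $b(f,N,S')$ and of the parsimony score. The hypothesis $PS_{\sw}(f,N) = PS(f,S')$ records the intended use of the corollary, namely that $S'$ is taken to be an embedding of a tree realising the softwired parsimony score of $f$ on $N$; it is this choice that makes the right-hand side $PS(f,S') + k\cdot b(f,N,S')$ a bound expressed in terms of $PS_{\sw}(f,N)$ when the corollary is invoked in Section~\ref{sec:main}.
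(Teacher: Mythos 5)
Your proposal is correct and follows essentially the same route as the paper: pick an extension $F'$ realising $b(f,N,S')$ (whose changing number equals $PS(f,S')$ by the definition of $b$), apply Lemma~\ref{l:main}, and conclude via $PS(f,S)\leq \ch(F,S)$. Your remark that the hypothesis $PS_{\sw}(f,N)=PS(f,S')$ is not needed for the inequality itself but records the intended application is also consistent with how the paper uses the corollary.
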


\begin{proof}
Let $F'$ be an extension of $f$ to $V(S')$ that 
realises $b(f,N,S')$. 
By Lemma~\ref{l:main}, there exists an extension $F$ of $f$ to $V(S)$ such that 
$PS(f, S) \leq \ch(F,S) \leq PS(f, S') + k\cdot b(f,N,S')$.
\end{proof}

We are finally in a position to establish the main result of this paper, which we restate for convenience.\\

\noindent{\bf Theorem~\ref{t:main}.} {\it Let $N$ be a phylogenetic network on $X$, and let $T$ be a phylogenetic $X$-tree in $D(N)$. Let $A$ be an alignment of characters on $X$. Then 
$$PS(A,T)\leq (k+1)\cdot PS_{\sw}(A,N) \text{ and }PS(A,T)\leq (k+1)\cdot PS_{\sw'}(A,N),$$ 
where $k$ is the level of $N$. }\\

\begin{proof}
We establish that $$PS(A,T)\leq (k+1)\cdot PS_{\sw}(A,N).$$ Since $PS_{\sw}(A,N)\leq PS_{\sw'}(A,N)$, it immediately follows that $PS(A,T)\leq (k+1)\cdot PS_{\sw'}(A,N)$ also holds.

First, assume that $A$ consists of a single character $f$. 
Let $S$ be an embedding of $T$ in $N$, and let $S'$ be an embedding of a phylogenetic $X$-tree in $N$ such that $PS_{\sw}(f, N) =  PS(f, S')$. 
By Corollary~\ref{c:main}, we have
\begin{align*}
    PS(f,T)=PS(f, S) \leq PS(f, S') + k \cdot b(f, N,S'),
\end{align*}
where the first equality follows from Lemma~\ref{l:fitch}. Now, as every blob $B$ in $N$ that is informative relative to $S'$ contributes at least one to $PS(f, S')$, we have
\begin{align*}
    b(f, N,S') \leq PS(f, S').
\end{align*}
By Lemma~\ref{l:fitch} and combining the last two inequalities, we obtain
\begin{align}
    PS(f,T) = PS(f,S) &\leq PS(f,S') + k \cdot PS(f, S') \nonumber\\
    &= (k+1) \cdot PS(f, S') \nonumber\\
    &= (k+1) \cdot PS_{\sw}(f, N). 
   \label{eq:single} 
\end{align}

Now, assume that $A = (f_1, \ldots, f_n)$ with $n \geq1$. Then, we apply Inequality~\eqref{eq:single} to each character, and obtain
\begin{align*}
    PS(A, T) &= \sum\limits_{i=1}^n PS(f_i, T) \\
    &\leq \sum\limits_{i=1}^n (k+1) \cdot PS_{\sw}(f_i, N) \\
    &\leq (k+1) \cdot \sum\limits_{i=1}^n PS_{\sw}(f_i, N) \\
    &= (k+1) \cdot PS_{\sw}(A,N).
\end{align*}
\end{proof}

\begin{figure}[t]
    \centering
    \includegraphics[width=\textwidth]{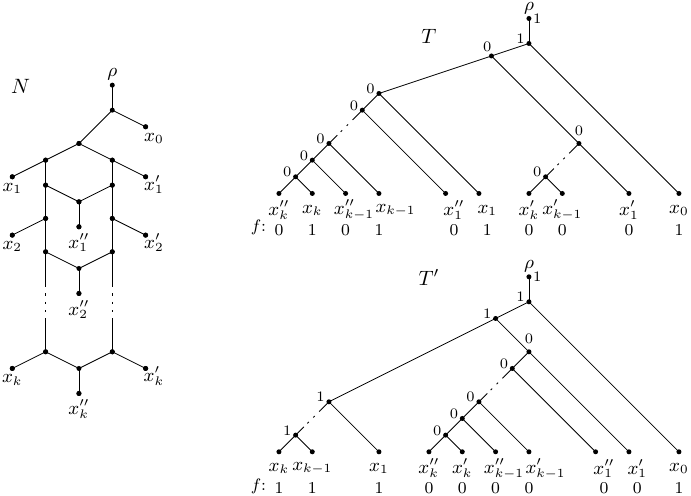}
    \caption{A level-$k$ network $N$ on $X = \{x_0\} \cup \{x_i, x'_i, x''_i : 1 \leq i \leq k\}$ (left) and two phylogenetic $X$-trees $T$ and $T'$ displayed by $N$ together with a binary character $f$ and extensions $F$ and $F'$ to $V(T)$ and $V(T')$, respectively (right).}   
    \label{fig:tight-example}
\end{figure}

We close this section by presenting, for each $k \geq 0$, a level-$k$ network and a binary character such that the upper bound stated in Theorem~\ref{t:main} is sharp. As level-0 networks on $X$ are phylogenetic $X$-trees, the bound is sharp for any level-0 network and any binary character on $X$. For $k \geq 1$, let $N$ be the level-$k$ network on $X = \{x_0\} \cup \{x_i, x'_i, x''_i: 1 \leq i \leq k\}$ that is depicted in Figure~\ref{fig:tight-example}. Further, let $f\colon X \rightarrow \{0,1\}$ be the binary character with $f(x) = 1$ if $x \in \{x_0, x_1, \ldots, x_k\}$ and $f(x) = 0$ otherwise. Let us consider the two phylogenetic $X$-trees $T, T'\in D(N)$ that are illustrated on the right-hand side of Figure~\ref{fig:tight-example} together with extensions $F$ and $F'$ of $f$ to $V(T)$ and $V(T')$, respectively.
By using Fitch's algorithm it is easy to verify that $F$ and $F'$ are minimum. Hence, we have $PS(f,T') = 1$ and, thus, $PS_{\sw}(f,N) = 1$ (since $f$ employs two character states, $PS_{\sw}(f,N) \geq 1$). 
Moreover, $PS(f,T) = k+1$. In summary, we have $PS(f,T) = (k+1)\cdot PS_{\sw}(f,N)$. As the construction shown in Figure~\ref{fig:tight-example} involves a single character, the two notions of softwired parsimony on $N$ coincide, and we have $PS(f,T) = (k+1)\cdot PS_{\sw'}(f,N)$ for the same example.

\section{Parental parsimony for phylogenetic networks} \label{sec:parental}
We now briefly consider the notion of \emph{parental parsimony} introduced by~\citet{van2018improved} as an alternative to softwired and hardwired parsimony. Intuitively, instead of defining the parsimony score of a phylogenetic network $N$ by considering its display set $D(N)$, parental parsimony considers the set of \emph{parental trees} (sometimes also called \emph{weakly displayed trees}~\cite{huber2016weakly}), which is a superset of $D(N)$. 

A {\it multilabelled tree} on $X$ is a leaf-labelled rooted tree  whose root has out-degree one, all other interior vertices have in-degree one and out-degree one, or in-degree one and out-degree two, and, for each element $x$ in $X$, there exists at least one leaf in $T$ that is labelled $x$.
Now using the same notation as \cite{van2018improved}, let $U^\ast(N)$ be the multilabelled obtained from a  phylogenetic network $N$ on $X$ as follows: The vertices of $U^\ast(N)$ are the directed paths in $N$ starting at the root of $N$, and for each pair of directed paths $p, p'$, there is an edge $(p,p')$ in $U^\ast(N)$  if and only if $p'$ is an extension of $p$ by one additional edge of $N$. 
Furthermore, each vertex in $U^\ast(N)$ corresponding to a path in $N$ starting at the root of $N$ and ending at $x \in X$ is labelled by $x$. For an example of the multilabelled tree $U^\ast(N)$ obtained from a phylogenetic network $N$, see Figure~\ref{fig:parental-tree}. Now, a phylogenetic $X$-tree is called a \emph{parental tree} of $N$ if it can be obtained from a subgraph of $U^\ast(N)$ by suppressing vertices of in-degree and out-degree one.
To denote the set of all parental trees of $N$, we use $P(N)$. Informally speaking, a tree is a parental tree of a phylogenetic network if it can be drawn inside the network in such a way that the tree vertices of the tree correspond to tree vertices of the network. Importantly, though, a parental tree is not necessarily a displayed tree (see Figure~\ref{fig:parental-tree}), whereas every displayed tree is also parental. 

\begin{figure}[t]
    \centering
    \includegraphics[width=\textwidth]{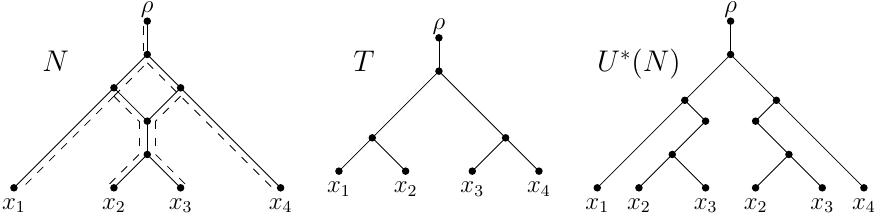}
    \caption{A phylogenetic network $N$ on $X=\{x_1, x_2, \ldots, x_4\}$, a phylogenetic $X$-tree $T$ such that $T\in P(N)$ and $T\notin D(N)$, and the multilabelled tree $U^\ast(N)$. The dashed lines in $N$ indicate how $T$ can be drawn inside $N$.}
    \label{fig:parental-tree}
\end{figure}

Given a character $f$ on $X$ and a phylogenetic network $N$ on $X$, the \emph{parental parsimony score} of $f$ on $N$ is now defined as
\begin{align*}
    PS_{\pa}(f,N) &= \min\limits_{T \in P(N)} PS(f,T), 
\end{align*}
where the minimum is taken over all parental trees for $N$. 

It was shown by \citet[Theorem 2]{van2018improved} that computing the parental parsimony score is NP-hard even if $f$ is a binary character and $N$ is a restricted type of a so-called tree-child network~\cite{cardona2009comparison}. 
It is thus a natural question if our main result for softwired parsimony (Theorem~\ref{t:main}) generalises to parental parsimony. Unfortunately, this is not the case. Suppose that $n$
is an even integer and that $N$ is the level-$1$ network on $n+1$ leaves depicted in Figure~\ref{fig:parental-counterexample}. Then, both phylogenetic trees $T$ and $T'$ as depicted in the same figure are parental trees of $N$ and $T'\notin D(N)$. Additionally, suppose that $f$ is the binary character that assigns state $0$ to leaves $x_2, x_4, x_6, \ldots, x_{n}$, and state $1$ to leaves $x_1, x_3, \ldots, x_{n+1}$.  Crucially, we have $PS(f, T) = n/2$ and $PS(f,T')=1$.
In particular, $PS_{\pa}(f,N)=1$, and thus, $$PS(f, T) = n/2 \not\leq 2 \cdot PS_{\pa}(f,N)= 2$$ for each $n > 4$ (a similar argument applies to $n$ being odd), which shows that Theorem~\ref{t:main} does not generalise to parental parsimony even if the phylogenetic network is level-$1$ with a single blob and the alignment consists of a single binary character.

\begin{figure}[t]
    \centering
    \includegraphics[width=.9\textwidth]{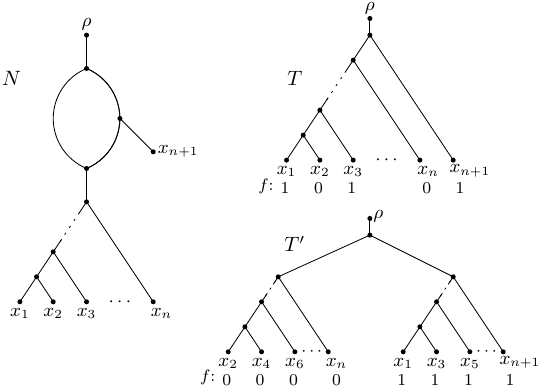}
    \caption{A level-$1$ network $N$ on $X=\{x_1,x_2, \ldots, x_{n+1}\}$ with $n$ being even and two parental trees $T$ and $T'$ of $N$ together with a binary character $f$ on $X$. Here, $PS_{\pa}(N,f) = PS(f,T') = 1$, whereas $PS(f,T)=n/2$.}
    \label{fig:parental-counterexample}
\end{figure}

\section{Softwired parsimony for semi-directed and unrooted networks}\label{sec:semi-directed}
In this section, we show that binary semi-directed networks have an analogous bound on the softwired parsimony score as the one we established for rooted binary phylogenetic networks. We briefly turn our attention to unrooted binary phylogenetic networks at the end of the section and show that the approach we take to establish the bound for semi-directed networks does not work in this setting. However, this does not exclude the possibility that similar bounds can be obtained for unrooted phylogenetic networks by other means.

A {\it binary semi-directed network} $N_s$ on $X$ is a leaf-labelled mixed multigraph without any loops that can be obtained from a rooted binary phylogenetic network $N_r$ by deleting its root, suppressing the child of the root, and omitting the direction of each edge that is not a reticulation edge. We call $N_r$ a {\it rooted partner} of $N_s$. By construction, $N_s$ has at most one pair of parallel edges.
This is precisely the case when $N_r$ has an underlying 3-cycle that contains the child of the root. Note that $N_s$ may have multiple rooted partners. A vertex $v$ in $N_s$ is called a {\it reticulation} if $N_s$ contains two edges, referred to as {\it reticulation edges}, that are directed into $v$. A {\it semi-directed level-$k$ network} is a semi-directed network $N_s$ such that a rooted partner of $N_s$ is a rooted binary level-$k$ network. Lastly,  an {\it unrooted binary phylogenetic $X$-tree} $T_u$ is a connected undirected acyclic graph whose leaf set is $X$ and whose inner vertices all have degree three. Note that an unrooted binary phylogenetic tree is a binary semi-directed network without reticulations. In the following, all rooted (resp. semi-directed) phylogenetic networks and rooted (resp. unrooted) phylogenetic trees are assumed to be binary. 

Now, let $T_u$ be an unrooted binary phylogenetic $X$-tree, and let $N_s$ be a semi-directed network on $X$. We say that $T_u$ is {\it displayed} by $N_s$ if there exists a subgraph $S$ of $N_s$ such that $S$ is a subdivision of $T_u$ (omitting the directions of the reticulation edges) and $S$ contains, for each reticulation $v$ in $N_s$, at most one reticulation edge incident with $v$. Similar to rooted phylogenetic networks, if $T_u$ is displayed by $N_s$, we call $S$ an {\it embedding} of $T_u$ in $N_s$. Furthermore, we refer to the set of all unrooted phylogenetic $X$-trees that are displayed by $N_s$ as the {\it unrooted display set} of $N_s$ and denote it by $D(N_s)$. 
 
Let $A=(f_1,f_2,\ldots,f_n)$ be an alignment of characters on $X$, and let $N_s$ be a semi-directed network on $X$. We define the {\it softwired parsimony score} of $A$ on $N_s$ as
\begin{align}\label{eq:soft-one-semi-directed}
PS_{\sw}(A,N_s)=\sum_{i=1}^n\min_{T_u\in D(N_s)}PS(f_i,T_u),
\end{align}
where the parsimony score of an unrooted phylogenetic $X$-tree $T$ is defined as in the rooted case since the corresponding concepts of the changing number and (minimum) extensions naturally translate to undirected graphs. 

The next lemma shows how the set of unrooted phylogenetic trees that are displayed by a semi-directed network $N_s$ is related to the set of rooted phylogenetic trees that are displayed by a rooted partner of $N_s$.

\begin{lemma}\label{l:semi-directed-display-set}
Let $N_s$ be a semi-directed network on $X$, and let $N_r$ be a rooted partner of $N_s$. Then, the following two statements hold. 
\begin{enumerate}[label={\upshape(\roman*)}]
\item If $T_u$ is an unrooted phylogenetic $X$-tree that is displayed by $N_s$, then there exists a rooted phylogenetic $X$-tree $T \in D(N_r)$ such that $T_u$ can be obtained from $T$ by deleting the root and suppressing its child. 
\item If $T$ is a rooted phylogenetic $X$-tree that is displayed by $N_r$, then there exists an unrooted phylogenetic $X$-tree $T_u \in D(N_s)$ such that $T_u$ can be obtained from $T$ by deleting the root and suppressing its child. 
\end{enumerate}
\end{lemma}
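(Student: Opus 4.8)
The plan is to exploit the tight correspondence between switchings of $N_r$ and switchings of $N_s$. Recall that, by Observation~\ref{ob:switching}, a rooted tree $T$ is displayed by $N_r$ exactly when some switching $R$ of $N_r$ yields $T$; an analogous statement holds for $N_s$, since an embedding of an unrooted tree in $N_s$ is precisely a subgraph obtained by choosing, for each reticulation, at most one of its incoming reticulation edges and then deleting the rest, followed by the usual clean-up. The key structural observation is that the reticulations of $N_s$ are in bijection with the reticulations of $N_r$ (deleting the root, suppressing its child, and undirecting the non-reticulation edges does not touch any reticulation vertex or its two incoming reticulation edges), so a switching of $N_r$ is \emph{the same data} as a switching of $N_s$.

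For part (ii), I would start with a rooted tree $T \in D(N_r)$ and fix a switching $R$ of $N_r$ that yields $T$. Interpreting $R$ as a switching of $N_s$ and carrying out the corresponding deletions and suppressions in $N_s$ produces a subgraph that is a subdivision of some unrooted binary phylogenetic $X$-tree $T_u$; thus $T_u \in D(N_s)$. It then remains to check that $T_u$ is exactly the unrooted tree obtained from $T$ by deleting the root and suppressing its child. This is most cleanly seen by comparing the two clean-up processes: the subgraph of $N_r$ selected by $R$ and the subgraph of $N_s$ selected by $R$ differ only in that the latter has the root and the root's child removed and the non-reticulation edges undirected. Since suppressing degree-two vertices and deleting unlabelled out-degree-zero vertices commute (up to the topology of the final tree) with deleting the root and suppressing its child, the two constructions agree. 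One has to be mildly careful about the exceptional case where $N_r$ has an underlying $3$-cycle through the child of the root — here $N_s$ has a pair of parallel edges — but in that situation deleting the root and suppressing its child still yields a well-defined unrooted tree and the bijection of reticulations and the argument above go through verbatim.

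For part (i), I would run the argument in reverse: given $T_u \in D(N_s)$, pick an embedding $S$, which selects (at most) one incoming reticulation edge per reticulation; extend this to a genuine switching $R$ of $N_s$, hence of $N_r$ (choosing the second edge arbitrarily for any reticulation not already resolved by $S$ — note that any such unresolved reticulation is one whose subtree was pruned away, so the choice does not affect the resulting tree). Let $T = T_R \in D(N_r)$ be the rooted tree that $R$ yields in $N_r$. The same commutation argument as above shows that deleting the root of $T$ and suppressing its child recovers $T_u$.

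The main obstacle I anticipate is purely bookkeeping: making precise that the suppression/deletion clean-up in $N_r$ and in $N_s$ produce ``the same'' tree up to the root operation, given that the root edge (or root path, after suppressions) of $N_r$ behaves slightly differently from an ordinary edge. The cleanest way around this is to phrase everything in terms of the subgraph selected by a fixed switching $R$ — call it $G_r \subseteq N_r$ and $G_s \subseteq N_s$ — observe that $G_s$ is obtained from $G_r$ by deleting $\rho$, suppressing its child, and undirecting all non-reticulation edges, and then invoke the fact (implicit in the definitions and in Lemma~\ref{l:fitch}-style reasoning) that these operations commute with the clean-up that turns $G_r$ into $T$ and $G_s$ into $T_u$. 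I do not expect any genuinely hard step; the content of the lemma is really just that the root plays no role in which (unrooted) trees can be embedded, once one has identified the reticulation structure.
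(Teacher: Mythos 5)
Your proposal is correct in outline but takes a genuinely different route from the paper. The paper never mentions switchings in this proof: it works directly with embeddings, fixing the child $v$ of the root of $N_r$, its children $w,w'$, and the exceptional edge $e_\rho$ of $N_s$ between $w$ and $w'$. For (i) it orients the given embedding $S_u$ inside $N_r$ and, in the case $e_\rho\notin S_u$, proves that the oriented subgraph has a unique vertex of in-degree zero and out-degree two from which all other vertices descend, then grafts on a directed path from $\rho$ to that vertex to obtain an embedding of a rooted tree; for (ii) it deletes the root path of an embedding of $T$ (keeping its lowest vertex) and undirects. Your switching-based argument delegates exactly this delicate ``where does the root attach'' analysis to Observation~\ref{ob:switching} and to the fact that a switching subgraph of $N_r$ is connected, contains $\rho$, and has no underlying cycle; what you pay instead is the identification of the tree yielded in $N_s$ with the unrooting of the tree yielded in $N_r$, i.e.\ the clean-up comparison you flag as bookkeeping. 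Both routes work; the paper localises the effort at the root, yours localises it in that comparison.

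Two points in your sketch need repair, though neither is fatal. First, the claim that passing from $N_r$ to $N_s$ ``does not touch any reticulation vertex or its two incoming reticulation edges'' fails when a child $w'$ of the root's child $v$ is a reticulation: suppressing $v$ replaces the reticulation edge $(v,w')$ by the edge $e_\rho=(w,w')$. The bijection between reticulation edges (hence between switchings) survives, but only via this correspondence, and your clean-up comparison must treat $e_\rho$ separately (as the paper does), not only in the parallel-edge case. Second, in (i) the justification that an unresolved reticulation's ``subtree was pruned away'' is not literally true: its descendant leaves do appear in $T_u$, just attached via other paths. What is true, and what you should argue -- for instance because the switching subgraph has no underlying cycle, so the pendant piece hanging below such a reticulation contains no leaves, or equivalently because $S_u$ is the minimal subtree spanning $X$ inside the connected, acyclic switching subgraph of $N_s$ -- is that the arbitrary completion of the switching does not change the tree obtained after clean-up.
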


\begin{proof}
Throughout the proof, we assume that $N_s$ does not contain a pair of parallel edges. Indeed, if $N_s$ contains such a pair of edges, $e_1$ and $e_2$ say, we can obtain a semi-directed network $N_s'$ on $X$ from $N_s$, by deleting $e_1$, suppressing the two resulting degree-two vertices, and omitting the direction of $e_2$. Clearly $D(N_s)=D(N_s')$. Now, let $v$ be the child of the root $\rho$ in $N_r$, and let $w$ and $w'$ be the children of $v$. If either $w$ or $w'$ is a reticulation, we assume without loss of generality that $w'$ is a reticulation. Observe that at most one of $w$ and $w'$ is a reticulation. Each edge in $N_r$ that is not incident with $v$ corresponds to exactly one edge in $N_s$, and each edge in $N_s$ except $e_\rho = \{w,w'\}$ (resp.\  $e_\rho = (w,w')$ if $w'$ is a reticulation in $N_r$) corresponds to exactly one edge in~$N_r$. In particular, each reticulation edge in $N_r$ corresponds to exactly one such edge in $N_s$ 
and vice versa. 
First, let $T_u$ be an unrooted phylogenetic $X$-tree that is displayed by $N_s$. By definition, there exists an embedding $S_u$ of $T_u$ in $N_s$. If $e_\rho$ is an edge in $S_u$, we obtain an embedding $S$ of a rooted phylogenetic $X$-tree $T$ in $N_r$ from $S_u$ as follows. We replace $e_\rho$ with the three directed edges $(\rho,v)$, $(v,w)$ and $(v, w')$ and each edge $e \neq e_\rho$ with its directed counterpart in $N_r$.
By definition, $T$ is displayed by $N_r$, that is $T \in D(N_r)$. Further, by construction, $T_u$ can be obtained from $T$ by deleting the root and suppressing its child. Now, let us consider the case that $e_\rho$ is not an edge in $S_u$. Let $S'$ be the connected acyclic subgraph of $N_r$ that is obtained from $S_u$ by replacing each edge in $S_u$ with its directed counterpart in $N_r$. 
We next show that there is a unique vertex $v_s$ in $S'$ with in-degree zero and out-degree two. Since $S'$ is acyclic, it follows that $S'$ contains a vertex with in-degree zero. Clearly, the out-degree of this vertex cannot be one or three and, thus, $v_s$ exists. 
Assume towards a contradiction that $v_s'\neq v_s$ is another vertex with this property. As $S'$ is connected and does not contain a vertex with in-degree two, one of $v_s'$ and $v_s$ is a descendant of the other in $S'$. But then one of these vertices has in-degree one, a contradiction.
It now follows that each vertex in $S'$ is a descendant of $v_s$. Let $\pi = (\rho = v_1,v_2,\ldots,v_t=v_s)$ be a directed path in $N_r$. Such a path $\pi$ exists as there is a directed path from the root to any vertex in $N_r$. Since each vertex in $S'$ is a descendant of $v_s$, it follows that $v_i$ with $i < t$ is not in $S'$. Then, we obtain an embedding $S$ of a rooted phylogenetic $X$-tree $T$ in $N_r$ from $S'$ by adding the $t-1$ directed edges $(v_i, v_{i+1})$, $1 \leq i < t$, to $S'$. Again, we have $T \in D(N_r)$ and, by construction, $T_u$ can be obtained from $T$ by deleting the root and suppressing its child. Hence, (i) holds.

Second, let $T$ be a rooted phylogenetic $X$-tree that is displayed by $N_r$. Let $S$ be an embedding of $T$ in $N_r$, and let $\pi = (\rho=v_1,v_2,\ldots, v_t)$ be the root path of $S$. Then, we obtain an embedding  $S_u$ of an unrooted phylogenetic $X$-tree $T_u$ in $N_s$ from $S$ by deleting each vertex that lies on $\pi$ except for $v_t$, turning directed edges into undirected ones and, if $t=2$, suppressing $v_t$. If $t=2$, then $\pi$ consists of the single edge $(\rho, v)$, and $S$ contains $(v,w)$ and $(v,w')$. Hence, $S_u$ contains the edge $\{w,w'\}$ as a result of suppressing $v$. Otherwise, if $t\geq 3$, then $S$ contains $(\rho,v)$ and exactly one of $(v,w)$ and $(v,w')$, and $v_t$ is a vertex in $N_s$. Furthermore, if $t \geq 4$, each edge $(v_i, v_{i+1})$ with $3 \leq i < t$ that is traversed by $\pi$ corresponds to a unique edge consisting of the same vertices in $N_s$. It now follows that $S_u$ is indeed an embedding of $T_u$ in $N_s$, that is, $T_u \in D(N_s)$. By construction, $T_u$ can be obtained from $T$ by deleting the root and suppressing its child. Hence, (ii) holds as well.  
\end{proof}

We next obtain the following corollary from Lemma~\ref{l:semi-directed-display-set}.

\begin{corollary}\label{c:pars-semi-directed}
Let $N_s$ be a semi-directed network on $X$, let $N_r$ be a rooted partner of $N_s$, and let $A$ be an alignment of characters on $X$. Then $PS_{\sw}(A,N_s) = PS_{\sw}(A,N_r)$. 
\end{corollary}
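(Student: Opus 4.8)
The plan is to prove the equality $PS_{\sw}(A,N_s) = PS_{\sw}(A,N_r)$ by showing both inequalities, and for each character separately it suffices (by the additive definition in Equation~\eqref{eq:soft-one-semi-directed}) to argue at the level of a single character $f$ together with a single tree realising the minimum. The key tool is Lemma~\ref{l:semi-directed-display-set}, which sets up a correspondence between $D(N_s)$ and $D(N_r)$ under the operation of ``deleting the root and suppressing its child''. The remaining ingredient needed is that this unrooting operation does not change the parsimony score of a character, i.e.\ if $T_u$ is obtained from a rooted phylogenetic $X$-tree $T$ by deleting the root and suppressing its child, then $PS(f,T_u) = PS(f,T)$. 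This is the standard fact that Fitch--Hartigan parsimony is root-independent on a tree; it can be justified briefly using Observation~\ref{ob:root-edge} (which lets us assume no change occurs on the root edge of a rooted tree or subdivision, so deleting that edge and the degree-two vertices costs nothing) and the observation that character-state transitions are counted identically whether or not edges carry directions.

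\textbf{First inequality: $PS_{\sw}(A,N_s) \le PS_{\sw}(A,N_r)$.} Fix a character $f_i$ of $A$, and let $T \in D(N_r)$ be a rooted phylogenetic $X$-tree attaining $PS_{\sw}(f_i,N_r) = \min_{T' \in D(N_r)} PS(f_i,T') = PS(f_i,T)$. By Lemma~\ref{l:semi-directed-display-set}(ii), there is an unrooted phylogenetic $X$-tree $T_u \in D(N_s)$ obtained from $T$ by deleting the root and suppressing its child. By the root-independence fact above, $PS(f_i,T_u) = PS(f_i,T)$. Hence $\min_{T_u' \in D(N_s)} PS(f_i,T_u') \le PS(f_i,T_u) = PS_{\sw}(f_i,N_r)$. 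Summing over $i = 1, \ldots, n$ gives $PS_{\sw}(A,N_s) \le PS_{\sw}(A,N_r)$.

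\textbf{Second inequality: $PS_{\sw}(A,N_r) \le PS_{\sw}(A,N_s)$.} This is symmetric, using Lemma~\ref{l:semi-directed-display-set}(i) in place of (ii): for each character $f_i$, pick an unrooted tree $T_u \in D(N_s)$ attaining $PS_{\sw}(f_i,N_s) = PS(f_i,T_u)$, lift it via part (i) to a rooted tree $T \in D(N_r)$ with $T_u$ obtained from $T$ by deleting the root and suppressing its child, again invoke $PS(f_i,T) = PS(f_i,T_u)$, and conclude $PS_{\sw}(f_i,N_r) \le PS(f_i,T) = PS_{\sw}(f_i,N_s)$; summing over $i$ finishes it. Combining the two inequalities yields the claimed equality.

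\textbf{Main obstacle.} The only genuinely non-routine point is verifying the root-independence of the parsimony score under the specific unrooting operation ``delete the root, suppress its child''. The subtlety is that suppressing the child of the root merges the two edges below that child into a single (now undirected) edge, so one must check that an optimal extension of $f$ on the rooted tree can be transported to the unrooted tree without creating an extra transition, and conversely. Observation~\ref{ob:root-edge} handles the rooted-to-unrooted direction cleanly — assume the root edge carries no transition, so deleting it and its endpoint is free — while the reverse direction follows since any extension on the unrooted tree can be extended to the re-rooted tree by assigning the root (and the re-inserted degree-two vertex) the state of the vertex at which the root edge is attached, adding no transitions. Everything else is bookkeeping that follows directly from Lemma~\ref{l:semi-directed-display-set} and the additivity of the softwired score over characters.
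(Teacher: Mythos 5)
Your proposal is correct and follows essentially the same route as the paper: the paper's proof likewise combines Lemma~\ref{l:semi-directed-display-set} with the fact that unrooting (deleting the root and suppressing its child) preserves the parsimony score of each character, and your two-inequality argument with per-character minimisers is just a more explicit write-up of that. Your justification of root-independence via Observation~\ref{ob:root-edge} is a reasonable filling-in of a step the paper states without proof.
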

\begin{proof}
The corollary follows from Lemma~\ref{l:semi-directed-display-set} and the fact that, if $T$ is a rooted phylogenetic $X$-tree and $T_u$ is an unrooted phylogenetic $X$-tree such that $T_u$ can be obtained from $T$ by deleting the root and suppressing its child, then $PS(A,T)=PS(A,T_u)$.
\end{proof}

We are now in a position to state the main result of the section. 

\begin{theorem}\label{t:main-semi-directed}
Let $N_s$ be a semi-directed network on $X$, and let $T_u$ be an unrooted phylogenetic $X$-tree that is displayed by $N_s$. Furthermore, let $A$ be an alignment of characters on $X$. Then 
$$PS(A,T_u)\leq (k+1)\cdot PS_{\sw}(A,N_s),$$ 
where $k$ is the level of $N_s$. 
\end{theorem}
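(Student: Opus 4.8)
The plan is to reduce the semi-directed case to the already-established rooted case via the two results that immediately precede the statement, namely Lemma~\ref{l:semi-directed-display-set} and Corollary~\ref{c:pars-semi-directed}, together with Theorem~\ref{t:main}. First I would fix a rooted partner $N_r$ of $N_s$; by the definition of a semi-directed level-$k$ network, $N_r$ is a rooted binary level-$k$ network, so the level of $N_r$ equals $k$. Since $T_u$ is displayed by $N_s$, Lemma~\ref{l:semi-directed-display-set}(i) supplies a rooted phylogenetic $X$-tree $T \in D(N_r)$ from which $T_u$ is obtained by deleting the root and suppressing its child. The key numerical facts are then: $PS(A,T) = PS(A,T_u)$ (the parsimony score is unaffected by deleting the root and suppressing its child, as noted in the proof of Corollary~\ref{c:pars-semi-directed}), and $PS_{\sw}(A,N_r) = PS_{\sw}(A,N_s)$ by Corollary~\ref{c:pars-semi-directed}.

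With these in hand the argument is a short chain of (in)equalities. Applying Theorem~\ref{t:main} to the rooted network $N_r$ and the displayed tree $T$ gives
\[
PS(A,T) \leq (k+1)\cdot PS_{\sw}(A,N_r).
\]
Rewriting the left-hand side as $PS(A,T_u)$ and the right-hand side using $PS_{\sw}(A,N_r)=PS_{\sw}(A,N_s)$ yields
\[
PS(A,T_u) \leq (k+1)\cdot PS_{\sw}(A,N_s),
\]
which is exactly the claimed bound. I would also remark that the bound is sharp here as well, since any of the tight examples from Figure~\ref{fig:tight-example} can be pushed through this correspondence (delete the root and suppress its child in the rooted tight example), so no separate construction is needed.

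I do not anticipate a genuine obstacle, since all the heavy lifting has already been done in Section~\ref{sec:main} and in Lemma~\ref{l:semi-directed-display-set}. The only point requiring a little care is the bookkeeping on the level: one must invoke the definition that $N_s$ is semi-directed level-$k$ precisely when \emph{some} rooted partner is rooted binary level-$k$, and then be sure to use \emph{that} partner as $N_r$ when applying Theorem~\ref{t:main}; using an arbitrary rooted partner could in principle have a different level. A secondary minor subtlety is the parallel-edge case (when $N_s$ has a pair of parallel edges coming from an underlying $3$-cycle through the child of the root), but this is already handled inside the proof of Lemma~\ref{l:semi-directed-display-set}, so it causes no trouble here. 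Thus the proof of Theorem~\ref{t:main-semi-directed} is essentially immediate from Theorem~\ref{t:main}, Lemma~\ref{l:semi-directed-display-set}, and Corollary~\ref{c:pars-semi-directed}.
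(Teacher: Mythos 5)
Your proposal is correct and follows essentially the same route as the paper's own proof: fix a rooted partner $N_r$, use Lemma~\ref{l:semi-directed-display-set}(i) to lift $T_u$ to a displayed rooted tree $T\in D(N_r)$, apply Theorem~\ref{t:main} to $N_r$ and $T$, and transfer the scores back via Corollary~\ref{c:pars-semi-directed} (the paper obtains $PS(A,T_u)=PS(A,T)$ by viewing $T_u$ as a reticulation-free semi-directed network and invoking that same corollary). Your extra care about choosing the rooted partner witnessing the level, and the remark on sharpness, are harmless additions that do not change the argument.
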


\begin{proof}
Let $N_r$ be a rooted partner of $N_s$, and let $T \in D(N_r)$ be a rooted phylogenetic $X$-tree such that deleting the root in $T$ and suppressing its child yields $T_u$. The rooted phylogenetic $X$-tree $T$ exists by Lemma~\ref{l:semi-directed-display-set} (i). Since an unrooted phylogenetic tree is a semi-directed network without reticulations, we have (i) $PS(A, T_u) = PS(A, T)$ by Corollary~\ref{c:pars-semi-directed}. Furthermore, by Theorem~\ref{t:main}, we have (ii) $PS(A, T) \leq (k+1)\cdot PS_{\sw}(A,N_r)$. Finally, by Corollary~\ref{c:pars-semi-directed}, we have (iii) $PS_{\sw}(A,N_r) = PS_{\sw}(A,N_s)$. Combining (i)--(iii) yields $PS(A, T_u) \leq (k+1)\cdot PS_{\sw}(A,N_s)$.
\end{proof}

We now shift our focus from semi-directed networks to \emph{unrooted} phylogenetic networks.
An {\it unrooted binary phylogenetic network} $U$ on $X$ is a connected undirected graph without any loops or edges in parallel, whose leaf set is $X$, and whose inner vertices all have degree three. As before,  we omit the term binary in the following as all unrooted phylogenetic networks considered in this section are binary. 

Let $U$ be an unrooted phylogenetic network on $X$. We say that an unrooted phylogenetic $X$-tree $T_u$ is {\it displayed} by $U$ if there exists a subgraph of $U$ that is a subdivision of $T_u$. Furthermore, we refer to the set of all unrooted phylogenetic $X$-trees that are displayed by $U$ as the {\it display set} of $U$ and denote it by $D(U)$. We call $U$  an {\it unrooted level-$k$ network} if at most $k$ edges have to be deleted in each biconnected component of $U$ such that the resulting graph is acyclic. Lastly, if $U$ can be obtained from a rooted phylogenetic network $N$ by deleting its root,  suppressing the child of the root, and omitting all edge directions, we say that $N$ is an {\it orientation} of $U$. 

Let $A=(f_1,f_2,\ldots,f_n)$ be an alignment of characters on $X$, and let $U$ be an unrooted phylogenetic network on $X$. We define the {\it softwired parsimony score} of $A$ on $U$ as
\begin{align}\label{eq:soft-unrooted}
PS_{\sw}(A,U)=\sum_{i=1}^n\min_{T\in D(U)}PS(f_i,T).
\end{align}

Next, we present an unrooted level-1 network $U$ and a binary character $f$ such that $PS_{\sw}(f, U) \neq PS_{\sw}(f, N)$ for an orientation $N$ of $U$, that is, we show that Corollary~\ref{c:pars-semi-directed} does not translate  from semi-directed networks to unrooted phylogenetic networks.  Moreover, we give two different orientations $N$ and $N'$ of $U$ with $PS_{\sw}(f, N) \neq PS_{\sw}(f, N')$. 

\begin{figure}[t]
\centering
\includegraphics[width=.8\textwidth]{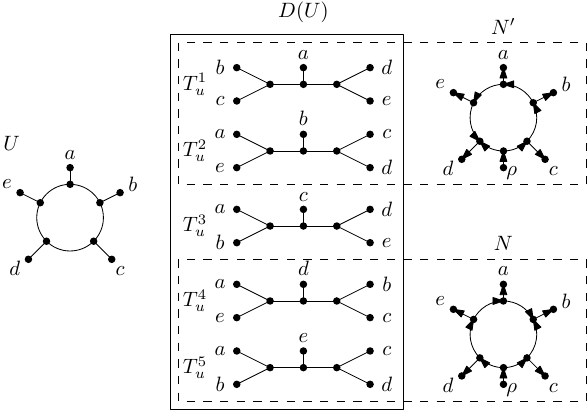}
\caption{
An unrooted level-1 network $U$ (left) with its display set $D(U)$ (middle), and two orientations $N$ and $N'$ of $U$ (right) with their respective display sets indicated by a dashed rectangle. More precisely, $D(N)$ and $D(N')$ can be obtained by orienting the enclosed unrooted phylogenetic $X$-trees.}
\label{fig:displaysets-orientations}
\end{figure}
To this end, let $U$ be the unrooted level-1 network on $X = \{a,b,c,d,e\}$, and let $N$ and $N'$ be the two orientations of $U$ as shown in  Figure~\ref{fig:displaysets-orientations}. The display set $D(U)$ of size five is shown in the middle of the same figure. By deleting the root and suppressing its child in each element of $D(N)$ (resp. $D(N')$), we obtain a subset of $D(U)$ as indicated by the two dashed rectangles that each enclose $N$ (resp. $N'$) and two elements of $D(U)$.  
Now for the single binary character $f\colon X \rightarrow \{0,1\}$ with 
\[
f(a) = f(b) = f(c) = 0 \text{ and } f(d) = f(e) = 1, 
\]
we have $PS_{\sw}(f, U) = 1$, $PS_{\sw}(f,N) = 2$ and $PS_{\sw}(f,N') = 1$. Since the softwired parsimony score of an unrooted phylogenetic network $U$ is not necessarily the same as the score of an orientation of $U$,  we cannot represent an unrooted phylogenetic network by an arbitrary orientation in the way we used a rooted partner of a semi-directed network to obtain Theorem~\ref{t:main-semi-directed}. While our example shows that using the same approach as in the semi-directed setting is not viable, it does not exclude the existence of similar bounds for unrooted phylogenetic networks or classes thereof (such as unrooted level-1 networks, for example).

\section{Concluding remarks} \label{sec:conclusion}
In this paper, we have obtained a bound on the softwired parsimony score of a gap-free alignment of multistate characters on rooted as well as semi-directed phylogenetic level-$k$ networks. To be precise, we have shown that the maximum difference between the softwired parsimony score of a phylogenetic network $N$ and the parsimony score of any tree displayed by $N$ is bounded by $k+1$ times the parsimony score of $N$. Unfortunately, our approximation result as stated in Theorem~\ref{t:main} cannot be generalised to alignments with gaps since it was already shown in~\cite[Corollary 2]{kelk2019finding} that computing the softwired parsimony score of a level-$1$ network for an alignment of binary characters that additionally allows gaps is APX-hard.

Extending the notion of softwired parsimony to semi-directed networks and exploiting a connection between the display sets of semi-directed networks and their rooted partners, we have shown that an analogous bound holds for semi-directed networks. For unrooted networks, on the other hand, the approach via rooted partners (more formally, via orientations) does not seem to be viable. Nevertheless, it would be an interesting question for future research to investigate if an analogous or similar bound for the softwired parsimony score can be obtained in some other way for unrooted phylogenetic networks.

Another interesting direction for future research would be to analyse whether our results also apply in the case of {\it non-binary} phylogenetic networks, i.e., phylogenetic networks that may have vertices of degree strictly greater than three. While there exists a polynomial time algorithm to compute the parsimony score of a given non-binary phylogenetic tree with character states assigned to its leaves, namely the Fitch-Hartigan algorithm~\cite{fitch1971toward,hartigan1973algorithm}, other concepts that our results rely on, such as the rSPR distance and its relation to the switching distance, have been studied less for non-binary phylogenetic trees (see \cite[Section~2]{fischer2016maximum} for a related discussion). \\

\noindent {\bf Acknowledgements.} The first and second authors thank the New Zealand Marsden Fund for financial support. All authors thank Steven Kelk for helpful discussions.

\bibliographystyle{abbrvnat}
\bibliography{References.bib}

\end{document}